\documentclass[USenglish,letterpaper]{lipics}
\usepackage{xr-hyper}
\usepackage{hyperref}
\usepackage{mathpartir}
\usepackage{multicol}
\usepackage{amsmath}
\usepackage{amsthm}
\usepackage{amssymb}
\usepackage{stmaryrd}
\usepackage[numbers]{natbib}
\usepackage{rotating}
\usepackage{wasysym}
\usepackage{wrapfig}
\usepackage[finalnew]{trackchanges}
\usepackage{listings}
\usepackage{multicol}
\usepackage{wrapfig}
\usepackage[usenames,dvipsnames]{xcolor}
\usepackage{color}
\usepackage{tikz}
\usetikzlibrary{calc}
\usepackage{float}
\usepackage{textcomp}

\newcommand{\hosturl}{the ancilliary files for this arXiv submission}

\usetikzlibrary{arrows,positioning} 
\tikzset{
    >=stealth',
    punkt/.style={
           rectangle,
           rounded corners,
           draw=black, very thick,
           text width=6.5em,
           minimum height=2em,
           text centered},
    pil/.style={
           ->,
           thick,
           shorten <=2pt,
           shorten >=2pt,}
}

\newtheorem{mainlemma}{Main Lemma}
\newtheorem{techlemma}{Technical Lemma}

\newlength\Colsep
\setlength\Colsep{10pt}

\definecolor{dkgreen}{rgb}{0,0.6,0}
\definecolor{gray}{rgb}{0.5,0.5,0.5}
\definecolor{lightgray}{rgb}{0.85,0.85,0.85}
\definecolor{mauve}{rgb}{0.58,0,0.82}
\definecolor{dkyellow}{rgb}{.8, .8, 0}

\lstset{ %
  language=Python,                
  basicstyle=\small\ttfamily,     
  xleftmargin=10pt,               
  numbers=left,                   
  numberstyle=\scriptsize,              
  stepnumber=1,                   
  numbersep=5pt,                  
  firstnumber=auto,               
  backgroundcolor=\color{white},  
  showspaces=false,               
  showstringspaces=false,         
  showtabs=false,                 
  columns=fullflexible,           
  tabsize=2,                      
  captionpos=b,                   
  breaklines=true,                
  breakatwhitespace=true,        
  keywordstyle=\color{blue},      
  morekeywords={int,str,float,void,dyn,dict,function,object,assert,list,True,False},%
  commentstyle=\color{dkgreen},   
  stringstyle=\color{mauve},      
  mathescape=true,%
  escapeinside={(*}{*)},          
  numberbychapter=false,  
}



\newcommand{\transient}{\textsf{transient}}
\newcommand{\guarded}{\textsf{guarded}}
\newcommand{\monotonic}{\textsf{monotonic}}

\newcommand{\dyn}{\star}

\newcommand{\tcheck}[2]{#1\Downarrow_{#2}}
\newcommand{\lett}[3]{\texttt{let}~#1=#2~\texttt{in}~#3}
\newcommand{\classt}[8]{\texttt{class}(#5){:}\llparenthesis#2,#3,#4\rrparenthesis~\texttt{init}{=}#8; #6; #7} 
\newcommand{\funt}[3]{\lambda #1 {\to} #2.\;#3}
\newcommand{\method}[3]{\varsigma\;x_s, #1 {\to} #2.\;#3}
\newcommand{\constructt}[2]{\sigma\;x_s, #1.\;#2}
\newcommand{\classty}[5]{\textsf{class}~\llparenthesis#2,#3,#4,#5\rrparenthesis} 
\newcommand{\objty}[3]{\textsf{object}~\llparenthesis#2,#3\rrparenthesis} 
\newcommand{\bclassty}[2]{\textsf{class}~\llparenthesis#1,#2\rrparenthesis} 
\newcommand{\bobjty}[1]{\textsf{object}~\llparenthesis#1\rrparenthesis} 
\newcommand{\bdyn}{\textsf{pyobj}}
\newcommand{\bto}[1]{#1{\to}~}
\newcommand{\chole}{\square}
 
\newcommand{\classe}[5]{\texttt{class}(#2){:}^{#5}~\texttt{init}=#4; #3} 
\newcommand{\fune}[2]{\lambda #1.\;#2}
\newcommand{\classh}[3]{\texttt{Class}(#1)\{\texttt{init}=#3; #2\}} 
\newcommand{\objh}[2]{\texttt{Object}(#1)\{#2\}} 
\newcommand{\upython}{\textmu{}Python}
\newcommand{\cicolor}{dkyellow}
\newcommand{\pcolor}{dkyellow}

\newcommand{\flr}[1]{\lfloor #1\rfloor}

\newcommand{\cig}{\Gamma}
\newcommand{\tlg}{\Gamma{;}\Sigma}
\newcommand{\ninference}[3]{\inferrule[(#1)]{#2}{#3}}
\newcommand{\fninference}[3]{\inferrule*[left=(#1)]{#2}{#3}}
\newcommand{\inference}[2]{\inferrule{#1}{#2}}
\newcommand{\li}[1]{\lstinline{#1}}

\makeatletter
\providecommand{\sketchname}{Proof Sketch}
\newenvironment{sketch}[1][\sketchname]{\par
  \pushQED{\qed}%
  \normalfont \topsep6\p@\@plus6\p@\relax
  \trivlist
  \item[\hskip\labelsep
        \itshape
    #1\@addpunct{.}]\ignorespaces
}{%
  \popQED\endtrivlist\@endpefalse
}
\makeatother

\addeditor{Mike}
\addeditor{Jeremy}

\begin{document}

\title{Gradual Typing in an Open World}
\Copyright{ }
\author{Michael M. Vitousek}
\author{Jeremy G. Siek}
\affil{Indiana University Bloomington\\
\texttt{\{mvitouse,jsiek\}@indiana.edu}}
\maketitle
\begin{abstract}
  Gradual typing combines static and dynamic typing in the same
  language, offering the benefits of both to programmers. Static
  typing provides error detection and strong guarantees while dynamic
  typing enables rapid prototyping and flexible programming
  idioms. For programmers to take full advantage of a gradual type
  system, however, they must be able to trust their type annotations,
  and so runtime checks must be performed at boundaries between static
  and dynamic code to ensure that static types are respected. Higher
  order and mutable values cannot be completely checked at these
  boundaries, and so additional checks must be performed at their use
  sites. Traditionally, this has been achieved by installing proxies
  on these values to moderate the flow of data between static and
  dynamic code, but this can cause problems if the language supports
  comparison of object identity or foreign function interfaces.

  Reticulated Python is a gradually typed variant of Python 3
  implemented via a source-to-source translator. It implements a
  proxy-free design named \emph{transient casts}.  This paper presents
  a formal semantics for transient casts and shows that not only are
  they sound, but they work in an open-world setting in which the
  Reticulated translator has only been applied to some of the program;
  the rest is untranslated Python. We formalize this \emph{open world
    soundness} property and use Coq to prove that it holds for Anthill
  Python, a calculus that models Reticulated Python.
\end{abstract}

\section{Introduction}

Gradual typing~\cite{Siek:2006bh,Tobin-Hochstadt:2006fk} enables the
safe interaction of statically typed and dynamically typed code. In
recent years, gradual typing has been of interest not only to the
research community
\cite{Rastogi:2012fk,Takikawa:2012ly,Swamy:2014aa,siek:2015mono,Ren:2013vn,allende:2013gradualtalk,Ahmed:2011fk}
but also to industry, as numerous new languages have arrived on the
scene with features inspired by gradual typing, including Hack
\cite{hack} and TypeScript \cite{typescript}.

Gradually typed languages use the dynamic type $\dyn$ and a
\emph{consistency relation} on types to govern how statically typed
and untyped code interacts: types are consistent if they are equal up
to the presence of $\dyn$. The consistency relation replaces type
equality in the static type system. For example, at a function call
the type of an argument is required to be consistent with the
parameter type of the function, not equal to it. To prevent
uncaught runtime type errors, additional checks must be performed at
runtime. These checks are typically achieved by inserting casts as
part of a type-directed, source-to-source translation.

Designing semantics for these casts in the presence of higher-order
values and mutation is a significant research endeavor
(\citet{Herman:2006uq, Siek:2009rt, siek:2015mono}, etc.). The
traditional approach \cite{Findler:2002es,Siek:2006bh,Herman:2006uq}
installs runtime guards (i.e.\ wrappers or proxies) on values, which
moderate between differently typed regions of code. Here we refer to
this as the \guarded{} approach.  Alternatively, the \monotonic{}
approach of \citet{siek:2015mono} avoids proxies by using runtime type
information to lock down objects as they pass through casts. With the
monotonic approach, some programs that would execute without error
under \guarded{} instead produce a runtime cast error. The related
design of \citet{Swamy:2014aa} requires further restrictions. For
example, a function of the type $\dyn\to\dyn$ cannot be cast to 
$\mathsf{bool}\to\mathsf{bool}$.

\emph{Reticulated Python}\footnote{Named for the largestc species of snake in the
  Python family.} is a platform for experimenting with
different dialects of gradually typed
Python~\cite{vitousek:2014retic}. One design implemented in
Reticulated Python is the \transient{} approach, which uses pervasive
use-site checks in lieu of guarding objects with proxies. This
alternative avoids a serious issue for \guarded{}: proxies are not
pointer-identical to the underlying proxied object, a fact that
results in significant and challenging-to-debug problems in practice.
\citet{vitousek:2014retic} use Reticulated to examine the effects of
proxies on gradually typed Python programs and reproduced the problems
observed by \citet{cutsem:2013proxies}.

Although significant research has focused on leveraging gradual typing
to achieve performance improvements
\cite{Herman:2006uq,Wrigstad:2010fk,siek:2015mono}, efficiency is not
an aim of the \transient{} design. Instead, \transient{} is an
error-detecting technique, similar to the checked mode of Dart
\cite{dart} (but sound). It can be enabled during development and
debugging and disabled for release.

\paragraph*{Transient semantics modeled in Anthill Python}

In this paper we present a complete formal semantics for \emph{Anthill
  Python},\footnote{Named for the smallest species of snake in the
  Python family.} a gradually typed calculus which models Reticulated
Python and which uses the transient semantics. Anthill Python supports much of the complexity of the Python
object system, including bound methods, multiple inheritance, and
mutable class fields and methods.

Just as Reticulated programs are translated into Python and then
executed, the dynamic semantics of Anthill is defined by translation
into an untyped language, \emph{\upython}. We define the
dynamic semantics of \upython{} and a translation from Anthill to
\upython, inserting checks according to the \transient{}
semantics. While \citet{vitousek:2014retic} informally described
\transient{}, Anthill and \upython{} provide a mathematical
description of the design. 

\paragraph*{Open world gradual typing}

In addition, in this paper we show that the distinction between
\guarded{} and \transient{} affects more than object identity; it
affects whether or not the language supports sound interaction with
foreign code in an ``open world''.

Like many gradually typed languages, Reticulated Python translates
input programs to a target language with explicit casts, in this case
standard Python 3. When using \guarded{}, the translated code cannot
interact with untranslated Python code (in this paper referred to as
``plain Python'') without losing the usual type soundness guarantees
of gradual typing~\cite{Siek:2006bh}: plain Python programs do not
include the explicit casts necessary to maintain the invariants
expected by translated Reticulated programs.  The \transient{}
approach, on the other hand, does not make any assumptions about its
clients. Instead, typed code performs all required checks internally.

The \guarded{} approach poses additional problems in the context of
Python. CPython, the reference implementation of Python, is
implemented in C, and many built-in functions are defined as C
functions. Furthermore, Python supports C and C++ extension modules
--- modules callable from Python that are implemented in C. This code
does not respect the abstractions used by \guarded{} to ensure
soundness and can mutate Python data structures as raw memory,
possibly violating invariants specified by Reticulated's type
system. The \transient{} semantics can soundly coexist with such
foreign functions because it performs all required checks within typed
code at use sites.

In this paper we discuss a formal property called \emph{open world
  soundness} which states that, if a program is translated from a
gradually typed surface language into an untyped target, it can be
embedded inside arbitrary untyped code without ever being the source
of an uncaught type error. For gradually typed languages, open world
soundness is a stronger property than traditional type safety.
We prove that open world soundness holds for the transient design as
modeled by Anthill Python and its translation to \upython. Our proof
is machine-checked by the Coq proof assistant and can be found at
\hosturl.  It relies on an auxiliary static type system for \upython{}
that captures invariants about programs translated from Anthill. This
type system discriminates between translated code and plain \upython{}
code by tagging all reducible expressions with labels indicating the
origin of the expression. The type system places strong requirements
on the translated expressions, while untranslated code is
unrestricted. We prove that the translation from Anthill to \upython{}
is type-preserving.





Open world soundness also allows programmers to benefit from gradual
typing without using its features themselves. Using \transient, a
Reticulated library writer can use static types in their code,
including on API boundaries. If this library is then translated to
regular Python, any Python programmer who uses the library will
benefit from the improved error detection resulting from Reticulated's
static types without having to use --- or even know about ---
Reticulated itself.

\subsection*{Contributions}
\begin{itemize}
\item We define the property of \emph{open world gradual typing} and
  evaluate the \guarded{} and \transient{} designs with respect to it
  (Section \ref{sec:openworld}).
\item We provide a formal semantics for the \transient{} design in the
  form of Anthill Python, a calculus that models Reticulated Python
  (Section \ref{sec:anthill}). This semantics is defined by
  translation to an untyped calculus, \upython{}, that models Python.
\item We prove that Anthill exhibits \emph{open world soundness}, an
  extension of type soundness which states that the translation from
  Anthill to \upython{} produces code that doesn't go wrong even when
  interoperating with plain \upython{} code (Section
  \ref{sec:proofs}). The full proof in Coq is at \hosturl.
\end{itemize}
Section \ref{sec:review} discusses background and related work, and
Section \ref{sec:conclusions} concludes.

\section{Background and Related Work}\label{sec:review}

In this section we discuss related work and review the prior work on
Reticulated Python~\cite{vitousek:2014retic}.

\subsection{Gradual typing}
\label{sec:gt-relatedwork}

Researchers and language designers have been interested in mixing
static and dynamic typing in the same language for quite some time
\cite{cartwright:1976uddt, Abadi:1989ez,
  bracha:1993strongtalk}. Gradual typing, developed by independently
by \citet{Siek:2006bh} and \citet{Flanagan:2006mn}, was preceded
quasi-static typing of \citet{Thatte:1990yv}, the Java extensions of
\citet{Gray:2005ij}, Bigloo, \cite{Serrano:2002zo} and Cecil
\cite{Chambers:2004vt}. Gradual typing is distinguished by its use of
the consistency relation (originally defined for nominally typed
languages by \citet{Anderson:2002kd}) to govern where typed and
untyped code can flow into each other, and where runtime checks need
to be performed to ensure safety at runtime. See \citet{siek:2015gg}
for a detailed discussion of the core principles that gradual typing
aims to satisfy.

\citet{allende:2013strats} develop an alternate cast insertion scheme
for Gradualtalk \cite{allende:2013gradualtalk} that facilitates
interaction between typed libraries and untyped clients without
requiring client recompilation. 
This approach, called 
``execution semantics'' uses callee-installed proxies on function
arguments.  While similar to our semantics, this approach is still
vulnerable to the problems with proxies that we address with the
\transient{} approach (see Section~\ref{sec:proxies}).

Another relevant alternative design is the \emph{like-types}
approach~\cite{bloom:2009thorn,Wrigstad:2010fk}. This approach avoids
proxies by splitting static type annotations into \emph{concrete
  types} (whose inhabitants are never proxied, and which cannot flow
into dynamically typed code) and like types, which can freely interact
with dynamic code. This approach was used in designing a sound variant
of TypeScript called StrongScript, which obeys open-world soundness
\cite{richards:2015strongscript}. However, because of the
incompatibility of concrete and like types, it is not straightforward
to evolve a program in StrongScript from dynamic to static, as is
frequently desirable. As a result, the \emph{gradual guarantee} of
\citet{siek:2015gg} does not hold for these designs.

Typed Racket~\cite{samth:2008ts,Takikawa:2012ly} includes first-class
classes and supports open-world interaction between Typed Racked
modules and untyped Racket code. In that work, the authors are aided
by Racket's module semantics, which allows functions exported from
Typed Racket into untyped code to be wrapped with a contract monitor
that ensures that interaction between typed and untyped code is sound
even though untyped clients are unaware of the static type
system. These features allow Typed Racket to have open world soundness
with a \guarded{} approach. Python does not have the same capabilities
for controlling module exports and faces the problems with proxies and
foreign functions explained above.

In recent years, gradual typing, or ideas related to it, has become
popular among industrial language designers, with C\#
\cite{Bierman:2010fk} adding a dynamic type and Typescript
\cite{typescript}, Dart \cite{dart}, and Hack \cite{hack} offering
static typechecking of optional type annotations. Academic language
designers have also retrofitted gradual typing to existing languages,
such as Racket \cite{samth:2008ts}, Smalltalk
\cite{allende:2013gradualtalk}, Ruby \cite{Ren:2013vn}, and Python
\cite{vitousek:2014retic}.

\subsection{Reticulated Python and the guarded cast semantics}

Reticulated Python implements several designs for gradual typing,
including \guarded, the traditional design for gradual
typing~\cite{Siek:2006bh,Herman:2006uq}. In this design, casts are
inserted at the locations of implicit conversions.  For example,
consider the following statically typed \li{filter} function, which
expects a parameter of type \li{Callable([int], bool)} (the syntax for
function type $\mathsf{int}\to\mathsf{bool}$).

\begin{minipage}{1.0\linewidth}
\begin{lstlisting}
def filter(fn:Callable([int],bool), lst:List(int))->List(int):
  nlst = []
  for elt in lst:
    if fn(elt):
      nlst.append(elt)
  return nlst
filter(lambda x: x % 2 == 0, [1, 2, 3, 4]) (*\label{ex:ci:call}*)
\end{lstlisting}  
\end{minipage}

The function created on line \ref{ex:ci:call} has no type
annotations, so it has type \li{Callable([Dyn], Dyn)}, where \li{Dyn}
is the dynamic type. It is casted from \li{Callable([Dyn], Dyn)} to
\li{Callable([int], bool)} at runtime, because of a cast inserted at
line \ref{ex:ci:call} by the translation from Reticulated Python to
Python.

Casts on first-order values are checked immediately but casts on
functions and objects are not. With \guarded{}, function casts
install a wrapper on the function which, when called, casts the input,
calls the underlying function, and casts the
result~\cite{Findler:2002eu}. Casts on objects return a proxy --- a
new object that casts fields and methods during reads and writes.

To see how object proxies are used to preserve type soundness, consider the
example in Figure~\ref{fig:guard-errors}, where the object
\li{Foo()}\note[Mike]{Space allowing, show translated code} is
expected to have type \li{\{'bar':int\}} by the typed \li{f} function. The \li{f} function passes
the object to the untyped \li{g} function which mutates the \li{bar}
field to contain a string. Upon return from \li{g}, \li{f} accesses
the \li{bar} field, expecting an \li{int}.
The code highlighted in \colorbox{\cicolor}{yellow} indicates the
locations where proxies are installed. The object bound to the
parameter \li{x} in \li{g} is therefore not the object passed to
\li{g} by \li{f} at line \ref{ex:rg:call}, but is instead a proxy to
that object. When \li{g} attempts to write a string to that proxy at
line \ref{ex:rg:upd8}, the proxy casts it to \li{int}, the type that
\li{y.bar} is expected to have in \li{f}, which triggers a cast error.

\begin{wrapfigure}{r}{.4\textwidth}
\begin{lstlisting}
class Foo:
  bar = 42
def g(x):
  x.bar = 'hello world'(*\label{ex:rg:upd8}*)
def f(y:{'bar':int}):
  g((*\makebox[0pt][l]{\color{\cicolor}\rule[-0.5ex]{0.5em}{2.1ex}}*)y) (*\label{ex:rg:call}*)
  return y.bar (*\label{ex:rg:deref}*)
f((*\makebox[0pt][l]{\color{\cicolor}\rule[-0.5ex]{2.5em}{2.2ex}}*)Foo()) (*\label{ex:rg:inst}*)
\end{lstlisting}
  \caption{Soundness via proxies}
  \label{fig:guard-errors}
\end{wrapfigure}

\subsection{Guarded faces practical problems}
\label{sec:proxies}

The \guarded{} approach is well studied, with many optimizations for
space and time efficiency and additional features such as blame
tracking~\cite{Herman:2006uq, Siek:2012uq, Ahmed:2011fk,
  Wadler:2009qv}, but a significant problem with \guarded{} is that a
proxy is not identical to its proxied object. This issue leads to
unexpected behavior when pointer equality checks (using Python's
\li{is} operator) are performed. \citet{cutsem:2013proxies} also
encountered and discussed this issue.  \citet{vitousek:2014retic}
found these issues to be a significant problem in Reticulated Python
in \guarded{}, preventing many programs from running as expected.






A related issue is that Python programs can inspect the class of a
value using the \li{type} function, and the class of a proxy is a
proxy class (rather than the class of the proxied object). Programs
that use reflection over the types of values can execute incorrectly
when the \li{type} function returns a proxy class rather than the
expected class.

\subsection{The transient cast semantics}

To solve this issue, \citet{vitousek:2014retic} implemented an
alternative, called the \transient{} approach, that does not use
proxies. (\citet{vitousek:2014retic} report on the implementation of
\transient{} but they do not define a formal semantics.)  The
\transient{} design uses \emph{checks} --- lightweight queries about
the current state of a value --- rather than proxy-building
casts. Checks are inserted into programs at function calls and object
reads. This approach transfers the responsibility for checking object
reads and function return values from the object or function itself to
the context that is performing the read or call. From an
implementation perspective, this moves the checking from the runtime
system (e.g.\ casts performed by object proxies) into the translated
programs' code.

Figure \ref{fig:ci} illustrates this difference. Figure \ref{fig:ci}a
shows a program in which a function \li{f} takes and uses a typed
object parameter. The argument to the function comes from untyped
code, and thus any sound semantics for gradually typing will require
some runtime checks when running this program. Text in
\colorbox{\pcolor}{yellow} indicates locations where checks must
occur because program values are entering into a context where they
are expected to have a certain type --- either by being passed in as
arguments, dereferenced from mutable state, or returned from a
function call.

Figure \ref{fig:ci}b shows the same program after explicit runtime
checks have been inserted. Checks are performed pervasively, including
in statically typed code. The write at line \ref{ex:ti:write} does not
need a check because \li{obj} is not proxied and any reads of \li{x}
will ensure that it has the right type for its context. The
highlighted parameters on line \ref{ex:ti:def1} indicate that the
function checks that \li{z} and \li{obj} have their expected types at
its entry point, translated as the explicit checks at the beginning of
the function body in Figure \ref{fig:ci}b. The check on line
\ref{ex:ti:get} verifies that \li{obj.meth} has a function type, and
line \ref{ex:ti:call} checks that the result of the function call is
an \li{int}. This transfer of responsibility from clients (and the
proxies they install) to typed code allows this design to succeed
without needing proxies.

To see how \transient{} preserves soundness in the presence of
mutation, consider the example in Figure \ref{fig:errors}, which
contains the same program as Figure \ref{fig:guard-errors}. Again,
sites where transient checks are made are highlighted in
\colorbox{\cicolor}{yellow}. While \guarded{} would install a proxy on
\li{y} at line \ref{ex:rt:call} to ensure that \li{y.bar} remains an
\li{int}, the \transient{} semantics passes \li{y} into \li{g}
directly. The strong update at line \ref{ex:rt:upd8} proceeds, but the
fact that a string has been written to \li{y.bar} will not be
observable within \li{f}. This is because at the dereference of
\li{y.bar} at line \ref{ex:rt:deref}, a transient check will attempt
to verify that \li{y.bar} is an \li{int} (the type expected in that
context). When this check fails, a runtime check error is reported.

\makeatletter
\let\origthelstnumber\thelstnumber
\newcommand*\Suppressnumber{%
  \lst@AddToHook{OnNewLine}{%
    \let\thelstnumber\relax%
     \advance\c@lstnumber-\@ne\relax%
    }%
}

\newcommand*\Reactivatenumber{%
  \lst@AddToHook{OnNewLine}{%
   \let\thelstnumber\origthelstnumber%
   \advance\c@lstnumber\@ne\relax
 }%
}
\makeatother

\begin{figure*}[tbp]
  \begin{multicols}{2}
    (a) Pre-insertion:
    \begin{lstlisting}
def f((*\makebox[0pt][l]{\color{\cicolor}\rule[-0.5ex]{10em}{2.3ex}}*)z:int, obj:{'x':int,(*\label{ex:ti:def1}\Suppressnumber*)
            (*\makebox[0pt][l]{\color{\cicolor}\rule[-0.5ex]{13.5em}{2.3ex}}*)'meth':Callable([int],int)})
            ->int:(*\label{ex:ti:def2}\Reactivatenumber*)
  obj.x = 42(*\label{ex:ti:write}*)
  method = (*\makebox[0pt][l]{\color{\cicolor}\rule[-0.5ex]{4.1em}{2ex}}*)obj.meth(*\label{ex:ti:get}*)
  result = (*\makebox[0pt][l]{\color{\cicolor}\rule[-0.5ex]{4.5em}{2ex}}*)method(z)(*\label{ex:ti:call}*)
  return result

def g(y):
  (*\makebox[0pt][l]{\color{\cicolor}\rule[-0.5ex]{4em}{2ex}}*)f(42, y)
    \end{lstlisting}
\columnbreak
    (b) Post-insertion:
    \begin{lstlisting}
def f(z, obj):(*\Suppressnumber*)
  check(z, int)
  check(obj,{'x':int,
           'meth':Callable([int],int)})(*\Reactivatenumber*)
  obj.x = 42
  method = check(obj.meth, Callable([int],int))
  result = check(method(x), int)
  return result

def g(y);
  return check(f(42, y), int)
    \end{lstlisting}
  \end{multicols}
  \caption{Check insertion by the \transient{} semantics}
  \label{fig:ci}
\end{figure*}

\begin{wrapfigure}{l}{.4\textwidth}
\begin{lstlisting}
class Foo:
  bar = 42
def g(x):
  x.bar = 'hello world'(*\label{ex:rt:upd8}*)
def f((*\makebox[0pt][l]{\color{\cicolor}\rule[-0.5ex]{6.5em}{2.2ex}}*)y:{'bar':int}):
  g(y) (*\label{ex:rt:call}*)
  return (*\makebox[0pt][l]{\color{\cicolor}\rule[-0.5ex]{2.5em}{2.1ex}}*)y.bar (*\label{ex:rt:deref}*)
f(Foo()) 
\end{lstlisting}
  \caption{Soundness via transient}
  \label{fig:errors}
\end{wrapfigure}

The lack of proxies and the reliance on use-site checks in the
\transient{} approach means that many of the problems with \guarded{}
are solved immediately. Object identity behaves normally because
objects and their types are always the same as in plain Python, and
every check that does not fail simply returns the checked value. On
the other hand, checks are pervasively installed even within typed
code, and in places where they would not be needed under
\guarded{}. Thus \transient{} is primarily useful as a debugging tool,
rather than something to be deployed in production code.

Other approaches also tackle the problem of object identity:
\citet{Keil:2015aa} present a solution based on the idea of making
proxies transparent with respect to identity and type
tests. TypeScript \cite{typescript}, Dart \cite{dart}, and other
languages that compile to JavaScript without any runtime checks are
trivially free of proxies but their type annotations are not enforced
at runtime. Dart also offers a \emph{checked mode}, wherein function
arguments are checked at runtime against optional type annotations,
similar to the \transient{} approach, but these checks do not cover
all cases and uncaught runtime type errors are still possible.
However, \transient{} solves an additional diffuculty that \guarded{}
faces: interoperability with untyped or foreign code in an open world,
which we discuss in the next section.

\section{Open World Gradual Typing}\label{sec:openworld}

In this section, we discuss the property of open world gradual
typing, show that \guarded{} is an obstacle to it, and discuss the
solution provided by \transient. While this discussion is specific to
Python, many issues are broadly applicable to other dynamically typed
languages such as JavaScript and Clojure.

\subsection{Guarded is unsuitable for an open world}

While \guarded{} is sound when all parts of a program have been seen
by the typechecker and translator, it is not always desirable or even
possible for that to be the case. We examine two situations where
interactions between Reticulated programs and plain Python or foreign
functions are troublesome.

\paragraph*{Plain Python clients cannot soundly use Reticulated programs with Guarded}

Reticulated Python typechecks programs and translates them into Python
3, and once they have undergone this translation, they can be executed
by a Python interpreter. As such programs are valid Python programs,
we would like to use them in combination with plain Python programs
from different sources. Unfortunately, when using Reticulated with
\guarded{}, this is not type safe and can cause programs to behave
unexpectedly.

It is important here to distinguish plain Python programs from
\emph{untyped Reticulated code}. The latter is Reticulated Python code
that does not contain type annotations, but that is translated into
Python by Reticulated. This translation is \emph{not} an identity
transformation, even on untyped Reticulated code: if the untyped code
makes calls into typed code, it must ensure that the values it passes
into typed code are cast to their expected type.

\begin{figure}[tbp]
\noindent\begin{minipage}{\textwidth}
\begin{minipage}[c][][c]{\dimexpr0.5\textwidth-0.5\Colsep\relax}
  Module \textbf{utils} (Reticulated Python):
  \begin{lstlisting}
def is_odd(x:int)->int:
  return x % 2
def intappend(lst:List(int), x)->List(int)
  list.append(lst, x)
  return lst
  \end{lstlisting}
\end{minipage}\hfill
\begin{minipage}[c][][c]{\dimexpr0.5\textwidth-0.5\Colsep\relax}
Module \textbf{fastexp} (Reticulated Python):
\begin{lstlisting}
import utils
def fastexp(x, n):
  ... utils.is_odd(n) ...
\end{lstlisting}
Module \textbf{interactive} (Plain Python):
\begin{lstlisting}
import utils
while True:
  print('Enter a number')
  inp = input()
  print(utils.is_odd(inp))
\end{lstlisting}
\end{minipage}%
\end{minipage}
\begin{center}
{\centering\hrule
\begin{tikzpicture}[scale=1, every node/.style={transform shape}]
   \node       (Ty1p)      {$\lfloor\textbf{utils}\rfloor$};
   \node[right=of Ty1p]       (Ty2p)      {$\lfloor\textbf{fastexp}\rfloor$}
     edge[pil, <->] (Ty1p);
   \node[above=of Ty1p]        (Ty1)       {\bf utils}
     edge[pil, dashed]  node[left] {$\leadsto$} (Ty1p);
   \node[above=of Ty2p]       (Ty2)      {\bf fastexp}
     edge[pil, <->] (Ty1)
     edge[pil, dashed] node[left] {$\leadsto$} (Ty2p);

   \node[left=of Ty1p]        (Un)    {\bf interactive}
     edge[pil, ->] (Ty1p);
   \node[below=of Un]         (Nat)   {\li{list.append}}
     edge[pil, <->] (Ty1p); 
   
   \node[above=1mm of Ty1] (V1) {Typed};
   \node[above=1mm of Ty2] (V2) {Untyped};
   \node[left=of V1] (V3) {Foreign};

   \node[right=2mm of Ty2p] (T2) {Python};
   \node[above=of T2] (T1) {Retic.};
   \node[below=of T2] (T3) {Native};
\end{tikzpicture}
}
  
\end{center}
  \caption{Reticulated and plain Python interaction}
  \label{fig:interaction}
\end{figure}
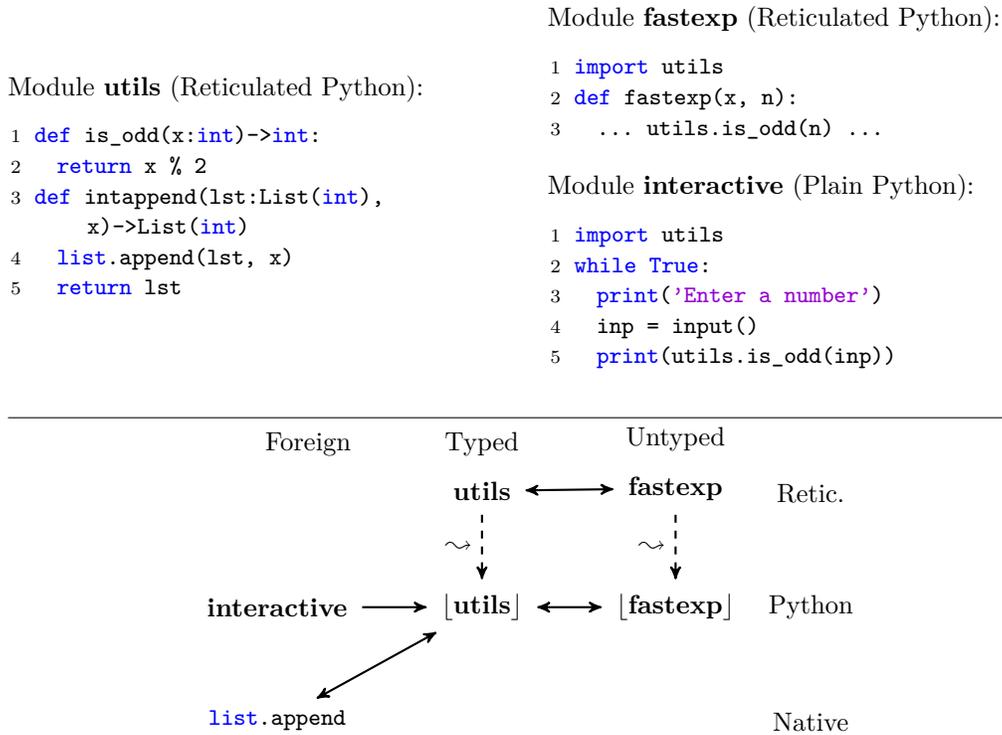

Figure \ref{fig:interaction} shows interaction between translated
Reticulated and plain Python, which is problematic under the
\guarded{} semantics. Two modules, \textbf{utils} and
\textbf{fastexp}, can communicate freely with each other even though
\textbf{fastexp} has no type annotations, because Reticulated
translates them (the dotted~$\leadsto$~arrows at the bottom of Figure
\ref{fig:interaction}) both into Python programs
$\lfloor\textbf{utils}\rfloor$ and $\lfloor\textbf{fastexp}\rfloor$
with explicit casts. Because $\lfloor\textbf{utils}\rfloor$ is a
standard Python module, the plain Python module \textbf{interactive}
may attempt to use it. The \textbf{interactive} module passes in a
string (the result of \li{input}) where the type system expected an
\textsf{int}, but this error is not detected because, in the
\guarded{} approach, it was the responsibility of \textbf{interactive}
to provide the correct type by checking its arguments at the call site
(or, in the case of higher order values, installing a proxy around
them). The result is a confusing Python error:

\begin{minipage}{1.0\linewidth}
\begin{lstlisting}[language={}]
  File "utils.py", line 2, in is_odd
    return x % 2
TypeError: not all arguments converted during string formatting
\end{lstlisting}
\end{minipage}

\noindent
This is the same error that would occur in plain Python without static
types, but now it may be more difficult to debug: the programmer may
not realize that they cannot trust the type annotations.

This issue prevents programmers from writing Reticulated programs with
types on API boundaries and distributing them as servers for unknown,
plain Python clients. Because typed API boundaries are an important
use case for gradual typing, this is a significant problem that
reaches well beyond Python. This can be partially solved by providing
untyped API functions for plain Python clients and casting the
client's calls to static types internally (an option discussed in the
context of Gradualtalk by \citet{allende:2013strats}). However, this
alone cannot guarantee safety in Reticulated's context, because Python
modules cannot selectively export information to clients --- clients
can still access the typed functions.

\paragraph*{Foreign functions cannot operate on proxies}
\label{sec:ffprox}
Another issue arises because many of the built-in functions used by
Python programs are not defined in Python itself, but rather in
compiled binaries. For example, in CPython the \li{append} method for
lists is defined in C code which is not accessible for introspection
by Reticulated and which, more importantly, does not use the same
machinery for attribute access as regular Python code.  

This is illustrated by the \li{intappend} function in \textbf{utils}
in Figure \ref{fig:interaction}. Suppose \li{intappend} is called from
Reticulated code as follows: \vspace{1ex}
 
\begin{minipage}{1.0\linewidth}
\begin{lstlisting}
import utils
utils.intappend([1,2,3],4) (*\label{ex:append:call}*)
\end{lstlisting}
\end{minipage}

\noindent
Under the \guarded{} semantics, at the call to \li{utils.isodd} at
line \ref{ex:append:call}, the list \li{[1,2,3]} is wrapped in a
proxy, which should prevent code from writing non-\li{int} values to
it. However, when it is passed to \li{list.append} as the
receiver,\footnote{For those unfamiliar with Python, \li{list.append}
  is an unbound method of the \li{list} class, and its first argument
  is set as the receiver of the call. By calling \li{list.append(x,
    ...)} instead of \li{x.append(...)}, we avoid going through the
  proxy on \li{x}.} \li{list.append} is not operating on the list
\li{[1,2,3]}, but instead on a proxy to it.

Like any Python value, the proxy has to be an instance of some
class. If the proxy's class was unrelated to \li{list}, then a list's
proxy would not appear to be a list instance to the commonly-used
\li{isinstance} function, causing many unexpected errors. Thus in
Reticulated's implementation of \guarded{}, this proxy will be a value
whose class is a dynamically generated subclass of the \li{list}
class. However, this choice means that the proxy itself contains its
own list in the heap, distinct from the list it is proxying. As a
proxy, its methods are replaced with reflective calls to the
underlying, proxied list, so this internal list is inaccessible to
Python code. The native code that implements \li{list.append},
however, directly operates on the memory in the heap containing the
list data of the proxy, bypassing the proxy's methods. As a result,
the proxy cannot intercept and forward changes to the underlying list,
and thus \li{list.append} mutates the (empty and otherwise
inaccessible) list contained within the proxy itself and leaves the
proxied list untouched. The overall result is that, quite
surprisingly, the result of the call at line \ref{ex:append:call} is
\li{[1,2,3]} --- the list that was passed in, unmodified.

As a result, Reticulated programs under \guarded{} can
behave unpredictably and in ways contrary to their expected
semantics. This problem is difficult to diagnose and its scope is
difficult to determine.

\subsection{The transient cast semantics is sound in an open world}
\label{sec:inf:trans}

Though \citet{vitousek:2014retic} discuss \transient{} only as a
technique to solve issues with type identity, it also enables open
world soundness. \texttt{Transient} transfers the responsibility for
checking object reads and function return values from the object or
function itself to the context that is performing the read or
call. Therefore, if plain Python code passes ill-typed values into
translated code, checks within the typed module will detect a type
mismatch and report an error rather than causing a confusing error or
not reporting the error at all. In Figure \ref{fig:interaction},
when \li{[1,2,3]} is an argument of \li{list.append}, it actually is
the list \li{[1,2,3]} being passed in, and native code can operate on
it exactly as it normally would. Even if native code mutates the list
in an ill-typed way, it will be detected as soon as a dereference
occurrs within typed Reticulated code.

Furthermore, because \transient{} avoids proxies, C code sees only the
Python values that were expected. When the \textbf{interactive} module
calls \textbf{utils}, instead of a confusing error referring to string
formatting, the result will be a cast error saying that a string has
been passed in instead of an int.

\section{Formalizing Transient with Anthill Python}
\label{sec:anthill}

The above section illustrates some of the complexities of integrating
typed and untyped code in Reticulated Python, and the ways that
Reticulated code can be used in an open world. This section formalizes
these properties, and the \transient{} design, with two calculi: a
gradually typed surface language named Anthill Python that models
Reticulated and an a calculus named \upython{} that models plain
Python.

\subsection{Design considerations of the type system} 

In this section we discuss some of the distinct features of Python
that Anthill Python models and how these features inform the design of
Anthill's static type system.

\paragraph*{Python classes}

Like many dynamic languages, Python's classes are first-class values
which can be passed between program components like any other kind of
data. Therefore a static type system for Python must give types to
classes just as it does to objects and functions. In Python, classes
may also directly contain attributes which can be accessed and
mutated, and they can be extended with new fields.  Such mutations to
a class value can affect all its instances.  For example, one can add
methods to a class after its creation and then call those methods on
any instance of the class.
Classes are also themselves callable --- calling a class is the Python
syntax for object construction.

These observations indicate that classes share elimination forms with
both objects and functions. To typecheck call sites and attribute
reads or writes without undue restrictions, the class types of
Reticulated (and Anthill) are subtypes of both object and  function types.

\paragraph*{Object construction}
\label{sec:statics:dynamic} 

Python objects come into existence with their object-local fields
uninitialized. The user-defined code in the \li{__init__} method must
instanciate them by mutating the receiver of the method (represented
as the first parameter of the method, called \li{self} by convention).
The \li{__init__} method is not restricted to only initializing
fields, though --- it can contain arbitrary Python code, including
code that allows the partially initialized receiver to escape into
other functions.  Consider the following code in which an object is
specified to have \li{x} and \li{y} fields of type \li{int} by the
\li{@fields} decorator.  The result is a failed check at line
\ref{ex:init:read}.

\begin{minipage}{1.0\linewidth}
  \begin{lstlisting}
@fields({'x':int, 'y':int})
class Point2D:
  def __init__(self:Point2D, x:int, y:int):(*\label{ex:init:decl}*)
    print(magnitude(self))
    self.x = x
    self.y = y
def magnitude(pt:Point2D)->float:
  return math.sqrt(pt.x ** 2 + pt.y ** 2) (*\label{ex:init:read}*)
  \end{lstlisting}
\end{minipage}

The problem here is that the annotation on line \ref{ex:init:decl}
that \li{self} is a \li{Point2D} is not true: at that point, \li{self}
does not obey the type that a \li{Point2D} is expected to have,
lacking the fields \li{x} and \li{y}. This problem is well known in
other object-oriented languages such as Java, and several solutions
have been proposed~\cite{qi:2009masked, fahndrich:2007delayed}. These
approaches are substantially more complex than the rest of the
Reticulated type system, and by virtue of being gradually typed,
Reticulated can rely on dynamic type checks to preserve
soundness. Reticulated therefore requires that the receiver parameter
of a constructor, i.e.\ \li{self}, is untyped. This behavior matches
Python, but it means that no program with objects can ever be truly
fully statically typed --- every object goes through a dynamic
initialization phase, after which it can be injected to a static
type by inserting a check at object instantiation sites.

\newcommand{\tsym}{A}
\begin{figure}
  \centering
  \[
  \begin{array}{lcl}
    t & ::= & x \mid n \mid t(\overline{t}) \mid t.\ell \mid t.\ell=t \mid \lett{x}{t}{t}\mid 
    \classt{X}{q}{\Delta}{\Delta}{\overline{t}}{\overline{\ell{=^M} m}}{\overline{\ell{=^F}t}}{c} \mid \\&& \funt{\overline{x{:}\tsym}}{\tsym}{t}\\
    m & ::= & \method{\overline{x{:}\tsym}}{\tsym}{t}\\
    c & ::= & \constructt{\overline{x{:}\tsym}}{t}\\
    q & ::= & \lozenge \mid \blacklozenge \\
    \tsym & ::= & \dyn \mid \mathsf{int} \mid \overline{\tsym} \to \tsym \mid \classty{X}{q}{\Delta}{\Delta}{\overline{\tsym}} 
    \mid \objty{X}{q}{\Delta}\\
    \Delta & ::= & \overline{\ell:\tsym}
  \end{array}
  \]
  \caption{Syntax for Anthill Python. (Overlines indicate sequencing.) }
  \label{fig:anthill-syntax}
\end{figure}

\subsection{Anthill Python}

Anthill Python is a gradually typed calculus which models Reticulated
Python.  The syntax of Anthill Python is defined in Figure
\ref{fig:anthill-syntax}.

\subsubsection{Types} 

Anthill Python's types include the dynamic type $\dyn$, integers,
$n$-ary function types, and object and class types. 

Object types $\objty{X}{q}{\Delta}$ contain two parts: an
\emph{openness} descriptor $q$ and an \emph{attribute type}
$\Delta$. Attribute types, written are partial maps from attribute
names to types, and represent the structural type of the object. An
openness descriptor can be either $\lozenge$ for ``open'' or
$\blacklozenge$ for ``closed''. Open objects support implicit width
downcasts, while getting or setting fields not present in $\Delta$
causes a static type error in a closed class; see Section
\ref{sec:transl:rw}.

Class types contain an openness descriptor and two attribute
types. The first attribute type contains the public interface for the
class itself while the second represents the instance fields present
in every instance of the class, but not in the class itself. The
instance fields are used in the type system to derive object types
from class types and the presence of instance fields is checked at
runtime after object construction. Class types also record the
parameter types of their constructors, so that object instantiations
can be type checked.

\subsubsection{Terms}

Terms in Anthill Python include variables, numbers, applications,
attribute access and update, and $n$-ary functions with parameter and
return type annotations. Methods and fields are read from with
$t.\ell$ and written to with $t.\ell=t$, where $\ell$ is the name of an
attribute; there is no syntactic differentiation between accessing
fields and methods.

Anonymous class definitions are also terms, and contain a number of
components. The openness descriptor $q$ and the attribute types
$\Delta_1$ and $\Delta_2$ specify the overall static type of the class.
The class' constructor is given by $\mathtt{init}=c$. In Python (and
Reticulated), object constructors are simply methods named
\li{__init__}, but to simplify the semantics, Anthill constructors $c$
are a special syntactic form. Constructors lack a return type, because
they are only invoked during object construction, and they perform
initialization by mutating a fresh, empty object passed to the
constructor.

The superclasses of a class definition are specified by the
$\overline{t}$ immediately after the \texttt{class} keyword. Fields
and their initializing terms are defined by
$\overline{\ell=^Ft}$. Methods $m$ (specified in class definitions by
$\overline{\ell=^Mm}$) are similar to functions except
that they always have an explicit receiver $x_s$ as their first
parameter. The $F$ and $M$ superscripts help to syntactically
distinguish between fields and methods.  No type annotation is given
for the receiver of a method; it is treated as having the type of an
instance of the enclosing class, as is typical for object-oriented
calculi~\cite{Abadi:1996fk}. Fields and methods are members of the
class, though they are accessible from instances, and constructors can
define fields specific to an instance.


\begin{wrapfigure}{R}{.6\textwidth}
  \centering
  \[
  \begin{array}{lcl}
    e & ::= & x \mid n \mid e(\overline{e}) \mid e.\ell \mid  e.\ell=e \mid \lett{x}{e}{e} \mid 
     \\&& \classe{X}{\overline{e}}{\overline{\ell=e}}{e}{} \mid 
     \fune{\overline{x}}{e} \mid \tcheck{e}{S}~\mid a\\
    S & ::= & \bdyn \mid \mathsf{int} \mid \bto{n} \mid \bclassty{\delta}{C} 
    \mid  \bobjty{\delta}\\
    C & ::= & n \mid \mathsf{Any}\\
    \delta & ::= & \overline{\ell}
  \end{array}
  \]
  \caption{Syntax for \upython}
  \label{fig:upython-syntax}
\end{wrapfigure}

\subsection{\upython}

The syntax of the target language of our translation, \upython, is
given in Figure \ref{fig:upython-syntax}. \upython{} differs
significantly from the ``classic'' cast calculi used by gradually
typed calculi, such as
$\lambda_\to^{\langle\tau\rangle}$~\cite{Siek:2006bh}. Typically, such
targets are statically typed languages that use explicit casts to
inject to and project from the dynamic type. In contrast, \upython{}
is dynamically typed, like Python itself, and has an expression that
performs \transient{} checks.
\newpage
\subsubsection{Expressions} 

The expressions of \upython{} mostly correspond to the terms of
Anthill. Class declarations do not syntactically distinguish between
functions, methods, and fields, so all class members are defined
together, and constructors still appear as a separate entry in the
class declaration but are now simply expressions.
The only new expressions introduced in \upython{} are heap addresses
and the check expression, $\tcheck{e}{S}$, which checks that the
runtime type of $e$ after evaluation corresponds to a \emph{type tag}
$S$.



\subsubsection{Type tags} 

The runtime type checks of \upython{} check against type tags $S$.
These tags represent information that can be checked \emph{shallowly}
and \emph{immediately} about \upython{} values. The tag \bdyn~is
satisfied by all objects, \textsf{int} is satisfied by integers, and
$\bto{n}$ is satisfied by functions of $n$ parameters, but makes no
claims about the types of the parameters or return type of the
function. Object and class tags $\delta$ contain the set of attributes
for their values, but not the types of their contents. Class tags also
optionally specify the arity of their constructors.

\subsection{Translation from Anthill to \upython}

Following the typical gradual typing approach, Anthill's runtime
semantics are defined in terms of a type-directed translation into
\upython. This translation rejects programs that are statically
ill-typed and inserts runtime checks according to the \transient{}
design.  As discussed in Section~\ref{sec:inf:trans}, \transient{}
inserts checks on all use sites of mutable and higher-order terms,
such as function applications or attribute accesses. The translation is
designed to ensure that, if a Anthill term $t$ has type $A$, then its
\upython{} translation $e$ can be checked against the type tag
$\flr{A}$ (defined in Figure \ref{fig:relations}) without resulting in
an error.

In this section, we discuss the details of this translation with
reference to excerpts of the translation relation. This translation is
an algorithmic, syntax-directed transformation. The translation is
specified in full in Appendix
\ref{apx:full-semantics}, Figure \ref{fig:apx:full-translation}.

\subsubsection{Attribute reads and writes}
\label{sec:transl:rw}

\begin{figure}
\boxed{\cig\vdash t\leadsto e:\tsym}
  \begin{mathpar}
\ninference{IGet}{\cig\vdash t\leadsto e:\tsym_1 \\ \Delta=\textit{mems}(\tsym_1) \\ \Delta(\ell)=\tsym_2}
          {\cig\vdash t.\ell \leadsto (\tcheck{e.\ell}{\flr{\tsym_2}}):\tsym_2}
\and
\ninference{IGet-Check}{\cig\vdash t\leadsto e: \tsym_1 \\ \Delta=\textit{mems}(\tsym_1)\\\\ \ell\not\in\textit{dom}(\Delta) \\ \textit{queryable}(\tsym_1)=\lozenge}
          {\cig\vdash t.\ell \leadsto (\tcheck{e}{\bobjty{\ell}}).\ell:\dyn}
\and
\ninference{ISet}{\cig\vdash t_1\leadsto e_1:\tsym_1 \\ \cig\vdash t_2 \leadsto e_2:\tsym_2' \\\\ \Delta=\textit{mems}(\tsym_1) \\ \Delta(\ell)=\tsym_2 \\ \tsym_2' \lesssim \tsym_2}
          {\cig\vdash t_1.\ell=t_2\leadsto (e_1.\ell=(\tcheck{e_2}{\flr{\tsym_2}})):\mathsf{int}}
\and
\ninference{ISet-Check}{\cig\vdash t_1 \leadsto e_1:\tsym_1 \\\\ \cig\vdash t_2\leadsto e_2:\tsym_2 \\ \Delta=\textit{mems}(\tsym_1) \\\\ \ell\not\in\textit{dom}(\Delta) \\ \textit{queryable}(\tsym_1)=\lozenge}
          {\cig\vdash t_1.\ell=t_2\leadsto (\tcheck{e_1}{\bobjty{\emptyset}}).\ell=e_2:\mathsf{int}}
  \end{mathpar}

\noindent\begin{minipage}{\textwidth}
\begin{minipage}[c][][c]{\dimexpr0.5\textwidth-0.5\Colsep\relax}
  \boxed{\textit{mems}(\tsym)=\Delta}\vspace{-3ex}
\[
  \begin{array}{rcl}
    \textit{mems}(\dyn) & = & \emptyset\\
    \textit{mems}(\objty{X}{q}{\Delta}) & = & \Delta\\
    \textit{mems}(\classty{X}{q}{\Delta_1}{\Delta_2}{\overline{\tsym}}) & = & \Delta_1
  \end{array}
\]
\end{minipage}\hfill
\begin{minipage}[c][][c]{\dimexpr0.5\textwidth-0.5\Colsep\relax}
  \boxed{\textit{queryable}(\tsym)=q}\vspace{-3ex}
\[
  \begin{array}{rcl}
    \textit{queryable}(\dyn) & = & \lozenge\\
    \textit{queryable}(\objty{X}{q}{\Delta}) & = & q\\
    \textit{queryable}(\classty{X}{q}{\Delta_1}{\Delta_2}{\overline{\tsym}}) & = & q
  \end{array}
\]
\end{minipage}%
\end{minipage}

\noindent\begin{minipage}{\textwidth}
\begin{minipage}[c][][c]{\dimexpr0.5\textwidth-0.5\Colsep\relax}
  \boxed{\textit{instantiate}(\Delta,\Delta)=\Delta}
  \begin{mathpar}
    \inferrule{\Delta_1' = \{x{:}\textit{inst-fun}(\tsym) \mid x{:}\tsym \in \Delta_1\} \\\\
               \Delta_2' = \{x{:}\tsym \mid x{:}\tsym\in\Delta_2, x\not\in\textit{dom}(\Delta_1)\}}{\textit{instantiate}(\Delta_1,\Delta_2)=\Delta_1'\cup \Delta_2'}
  \end{mathpar}
\end{minipage}\hfill
\begin{minipage}[c][][c]{\dimexpr0.5\textwidth-0.5\Colsep\relax}
  \boxed{\textit{inst-fun}(\tsym)=\tsym}
\begin{gather*}
    \textit{inst-fun}(\tsym_0,\ldots,\tsym_n{\to} \tsym) {=} \tsym_1,\ldots,\tsym_n {\to} \tsym\\
    \textit{inst-fun}(\tsym) = \tsym \text{ if }\tsym\neq \overline{\tsym_1}\to \tsym_2
  \end{gather*}
\end{minipage}%
\end{minipage}
  \caption{Translation for attribute reads and writes}
  \label{fig:translation-lookup}
\end{figure}
 
\begin{figure}
  \boxed{\flr{\tsym}=S}
  \begin{mathpar}
    \flr{\dyn}=\bdyn \and \flr{\mathsf{int}}=\mathsf{int} \and \flr{\overline{\tsym_1}\to \tsym_2}=\bto{|\overline{\tsym_1}|}\and
    \flr{\objty{X}{q}{\Delta}}=\bobjty{\flr{\Delta}}\and
    \flr{\classty{X}{q}{\Delta_1}{\Delta_2}{\overline{\tsym}}}=\bclassty{\flr{\Delta_1}}{|\overline{\tsym}|} \and\and \flr{\Delta}=\{x \mid x{:}\tsym\in \Delta\}
  \end{mathpar}
  \boxed{\tsym\lesssim \tsym}
  \begin{mathpar}
    \dyn\lesssim \tsym \and
    \tsym\lesssim \dyn \and
    \texttt{int}\lesssim\texttt{int}
    \and
    \inference{|\tsym_1| = |\tsym_3| \\ \overline{\tsym_3\lesssim \tsym_1} \\ \tsym_2 \lesssim \tsym_4}{\overline{\tsym_1}\to \tsym_2\lesssim\overline{\tsym_3}\to \tsym_4}\and
    \inference{\Delta_1\lesssim\Delta_2}{\objty{X}{q_1}{\Delta_1}\lesssim \objty{Y}{q_2}{\Delta_2}}\and
    \inference{\Delta_1\lesssim\Delta_3 \\ \Delta_2\lesssim\Delta_4 \\ |\tsym_1| = |\tsym_2| \\ \overline{\tsym_2\lesssim \tsym_1}}{\classty{X}{q_1}{\Delta_1}{\Delta_2}{\overline{\tsym_1}}\lesssim \classty{Y}{q_2}{\Delta_3}{\Delta_4}{\overline{\tsym_2}}}\and
    \inference{\Delta_1\lesssim\Delta_3}{\classty{X}{q_1}{\Delta_1}{\Delta_2}{\overline{\tsym}}\lesssim \objty{Y}{q_2}{\Delta_3}}\and
    \inference{\overline{\tsym_2\lesssim \tsym_1} \\ \objty{X}{q}{\textit{instantiate}(\Delta_1,\Delta_2)} \lesssim \tsym_3}{\classty{X}{q}{\Delta_1}{\Delta_2}{\overline{\tsym_1}}\lesssim \overline{\tsym_2}\to \tsym_3}
  \end{mathpar}
\noindent\begin{minipage}{\textwidth}
\begin{minipage}[c][][c]{\dimexpr0.5\textwidth-0.5\Colsep\relax}
  \boxed{\Delta\lesssim\Delta}\vspace{-2ex}
  \begin{mathpar}
\inference{\forall x \in \textit{dom}(\Delta_2).\;\Delta_1(x)\sim\Delta_2(x)}
          {\Delta_1 \lesssim \Delta_2}
  \end{mathpar}
\end{minipage}\hfill
\begin{minipage}[c][][c]{\dimexpr0.5\textwidth-0.5\Colsep\relax}
  \boxed{\Delta\sim\Delta}
  \begin{mathpar}
\inference{\forall x \in \textit{dom}(\Delta_1) \cap \textit{dom}(\Delta_2).\;\Delta_1(x)\sim\Delta_2(x)}
          {\Delta_1 \sim \Delta_2}
  \end{mathpar}
\end{minipage}%
\end{minipage}
  \caption{Relations used by the Anthill translation.
  Vertical bars denote the size of a collection.}
  \label{fig:relations}
\end{figure}

Figure \ref{fig:translation-lookup} shows the typechecking and
translation rules for object reads and writes, as well as the
\textit{mems} metafunction, which returns the attribute type $\Delta$
representing the attributes statically known to be present in the
value being read from or written to.

Rules \textsc{IGet} and \textsc{ISet} are used for reads and writes
from expressions that are statically known to contain the
attribute. In these cases, a check is inserted around the resulting
expression (\textsc{IGet}) or the expression to be written
(\textsc{ISet}) to make sure it corresponds with the type specified by
the type of the object. The type tag for this check is generated from
a static Anthill type using the $\flr{\tsym}$ metafunction, shown in
Figure \ref{fig:relations}. In the case of \textsc{ISet}, the type of
the term to be written $t_2$ must also be subtype-consistent (written
$\lesssim$) with the type of the attribute. This prevents static type
errors, such as attempting to write an \textsf{int} into a field that
expects an object.  Subtype-consistency, which can be thought of as
subtyping ``up to'' type dynamic \cite{Siek:2007qy}, is defined in
Figure \ref{fig:relations}, and indirectly uses the consistency
relation $\sim$ \cite{Siek:2006bh} from Appendix \ref{apx:full-semantics},
Figure \ref{fig:apx:more-metafunctions}.

The other rules, \textsc{IGet-Check} and \textsc{ISet-Check} are used
when it is not statically known that the subject of a read or write
contains the appropriate attribute, either because it is of type
$\dyn$, or because it has a class or object type that omits the
attribute. In these cases, the type system only allows the read or
write if the object's type is \emph{open} (denoted $\lozenge$) --- the
definition of the \textit{queryable} metafunction means that a $\star$-typed
object is always $\lozenge$. Width downcasts are implicitly performed on open types but not on
closed types ($\blacklozenge$). \textsc{IGet-Check} inserts a check to
ensure that the attribute is present, but \textsc{ISet-Check} only
checks that the subject of the write is an object (i.e.\ not an
integer or function) because attribute writes can be used to add new
attributes to objects and classes.

\subsubsection{Functions and applications}

\begin{figure}
\boxed{\cig\vdash t\leadsto e:\tsym}
  \begin{mathpar}
\fninference{IFun}{\Gamma,\overline{x:\tsym_1} \vdash t\leadsto e:\tsym_2' \\ \tsym_2' \lesssim \tsym_2}
            {\Gamma {\vdash} (\funt{\overline{x{:}\tsym_1}}{\tsym_2}{t}) \leadsto (\fune{\overline{x}}{\overline{\mathtt{let}~x=\tcheck{x}{\flr{\tsym_1}}~\mathtt{in}~}e}):\overline{\tsym_1}{\to} \tsym_2}
\and
\ninference{IApp-Dyn}{\cig\vdash t_1\leadsto e_1:\dyn \\ \cig\vdash \overline{t_2\leadsto e_2:\tsym}}
          {\cig\vdash t_1(\overline{t_2}) \leadsto (\tcheck{e_1}{\bto{|\overline{\tsym}|}})(\overline{e_2}) : \dyn}
\and
\ninference{IApp-Fun}{\cig\vdash t_1\leadsto e_1:\overline{\tsym_1} \to \tsym_2 \\ 
           \Gamma\vdash \overline{t_2\leadsto e_2:\tsym_1'} \\\\ |\overline{\tsym_1}| = |\overline{t_2}| \\ 
           \overline{\tsym_1' \lesssim \tsym_1}}
          {\cig\vdash t_1(\overline{t_2})\leadsto \tcheck{(e_1(\overline{e_2}))}{\flr{\tsym_2}} : \tsym_2}
\and
\ninference{IApp-Constr}{\cig\vdash t_1\leadsto e_1:\classty{X}{q}{\Delta_1}{\Delta_2}{\overline{\tsym_1}} \\
           \Gamma\vdash \overline{t_2\leadsto e_2:\tsym_1'} \\ |\overline{\tsym_1}| = |\overline{t_2}| \\
           \overline{\tsym_1' \lesssim \tsym_1} \\ \tsym_2 = \objty{X}{q}{\textit{instantiate}(\Delta_1,\Delta_2)}}
          {\cig \vdash t_1(\overline{t_2}) \leadsto \tcheck{(e_1(\overline{e_2}))}{\flr{\tsym_2}}:\tsym_2}
  \end{mathpar}
  \caption{Translation for functions and applications}
  \label{fig:translation-fun}
\end{figure}

\begin{figure*}
\boxed{\cig\vdash t\leadsto e:\tsym}
\begin{mathpar}
  \ninference{IClass}{\cig\vdash \overline{t_s\leadsto e_s:\tsym_s} \\
           \tsym_{class}=\classty{X}{q}{\Delta_1}{\Delta_2}{\overline{\tsym_c}}\\
\cig\vdash_\sigma c \leadsto e_c : \overline{\tsym_c} \\ \Gamma; \tsym_{class}\vdash_\varsigma \overline{m\leadsto e_m:\tsym_m} \\ \cig\vdash \overline{t_f\leadsto e_f:\tsym_f} \\
           \overline{e_s'=\tcheck{e_s}{\bclassty{\flr{\textit{mems}(\tsym_s)}}{\mathsf{Any}}}} \\ 
           \forall x \in \mathit{dom}(\Delta_1),~ (\overline{\ell_f}\times \overline{A_f} \cup 
           \overline{\ell_m}\times \overline{A_m} \cup \overline{\mathit{mems}(A_s)})(x) \lesssim \Delta_1(x)}
          {
              \cig\vdash \classt{X}{q}{\Delta_1}{\Delta_2}{\overline{t_s}}{\overline{\ell_f=^M m}}{\overline{\ell_m=^F t_m}}{c} 
              \leadsto \\\\ \classe{X}{\overline{e_s'}}{\overline{\ell_m=e_m},\overline{\ell_f=e_f}}{e_c}{}:\tsym_{class}
            }
\end{mathpar}
\boxed{\cig\vdash_\sigma c\leadsto e:\overline{\tsym}}
\begin{mathpar}
\fninference{IConstruct}{\Gamma,x_s{:}\dyn,\overline{x{:}\tsym_1}\vdash t\leadsto e:\tsym_2}
          {\cig\vdash_\sigma \constructt{\overline{x{:}\tsym_1}}{t} \leadsto \fune{x_s,\overline{x}}{\overline{\mathtt{let}~x=\tcheck{x}{\flr{\tsym_1}}~\mathtt{in}~}e}:\overline{\tsym_1}}
\end{mathpar}
\boxed{\Gamma; \tsym\vdash_\varsigma m\leadsto e:\tsym}
\begin{mathpar}
\ninference{IMethod}{\tsym_o = \objty{X}{q}{\textit{instantiate}(\Delta_1,\Delta_2)} \\ \tsym_2' \lesssim \tsym_2\\ \Gamma,x_s{:}\tsym_o,\overline{x{:}\tsym_1}\vdash t\leadsto e:\tsym_2'}
          {
              \Gamma; \classty{X}{q_1}{\Delta_1}{\Delta_2}{\overline{\tsym_c}}\vdash_\varsigma \method{\overline{x{:}\tsym_1}}{\tsym_2}{t} \leadsto \\\\
              \fune{x_s,\overline{x}}{\lett{x_s}{\tcheck{x_s}{\flr{\tsym_o}}}{\overline{\mathtt{let}~x=\tcheck{x}{\flr{\tsym_1}}~\mathtt{in}~}e}:\overline{\tsym_1}\to \tsym_2}
            }
\end{mathpar}
  \caption{Translation for classes and methods}
  \label{fig:translation-classes}
\end{figure*}

The translation from Anthill to \upython{} for functions and
applications is shown in Figure \ref{fig:translation-fun}. Anthill
functions are translated into \upython{} functions using the
\textsc{IFun} rule, which inserts checks to ensure that parameters
have their expected types in the body of the function. Functions are
not responsible for ensuring that they return results that correspond
to their return types (modulo static type errors). Instead, callers
check that the resultant value is of the appropriate type, as
indicated by \textsc{IApp-Fun} and \textsc{IApp-Constr}. This is
especially important for constructor calls, which operate by mutating
a dynamically typed empty receiver (as discussed in Section
\ref{sec:statics:dynamic} and below in Section
\ref{sec:fml:trans:class}). The caller is responsible for checking
that the resulting object has the type of an instance of the class.

\subsubsection{Classes and objects}
\label{sec:fml:trans:class}

The rule for translating Anthill Python class definitions into
\upython{} is given as \textsc{IClass} in Figure
\ref{fig:translation-classes}. In this translation, the superclasses
of the class are translated and have checks placed around them to
ensure that they are, in fact, classes --- an error occurs if a
class declaration inherits from a non-class value.

Class fields, methods, and the constructor are also translated into
\upython{} expressions, and the latter two cases require new
judgments. Constructors and methods are similar to functions, using
their first argument as the receiver.  In \textsc{IConstruct}, the
receiver $x_s$ has dynamic type, because the instance has yet to
actually be constructed. No concern is paid to the return type because
the return value is discared when called. Methods, translated by
\textsc{IMethod}, are similar, except that the receiver is given the
type of an instance of the enclosing class. In both cases, as in
\textsc{IFun}, checks are inserted to ensure that the parameters have
the correct type.

\textsc{IClass} also ensures that all declared class attributes (those
in $\Delta_1$) can be found in the fields, methods, or superclasses of
the class, and that their types are subtype-consistent with their
declared types.

As observed by \citet{Takikawa:2012ly}, width subtyping for class
types is statically unsound, and because Anthill's object types
support width subtyping, classes do as well, as shown in Figure
\ref{fig:relations}. Anthill's static type system does, therefore,
admit some maximally annotated programs that result in runtime check
failures. (Since the self-reference of a constructor is always
dynamically typed, programs with classes are never \emph{fully}
annotated anyway.) However, the pervasive nature of \transient{}'s
runtime checking means that this cannot lead to uncaught type errors
at runtime.

\subsubsection{Eager and delayed error detection}

The translation from Anthill to \upython{} inserts checks only when
necessary to ensure soundness, whereas Reticulated performs extra
eager checks to provide better feedback to programmers. Eagerly
checking objects does not aid in ensuring soundness, however, and
therefore the translation from Anthill to \upython{} installs only
shallow checks at use sites. For example, in the following program, a
class specifies that its instances should have the field
\li{x:int}. However, its constructor instead writes a function to
\li{x}.

\begin{minipage}{1.0\linewidth}
  \begin{lstlisting}
let C = class ()(*$\llparenthesis \lozenge, \emptyset, \text{x:int}\rrparenthesis$\label{ex:ant:classdef}*):
           init=((*$\sigma$*) self. self.x=(*$\lambda$*) z. z)
in let c = C() (*\label{ex:ant:constr}*)
in c.x (*\label{ex:ant:get}*)
  \end{lstlisting}
\end{minipage}

Reticulated would detect this error at the object instantiation site
at line \ref{ex:ant:constr}, because \li{c.x} is a function, not an
int. However, even if the constructor was corrected, \li{c.x} could
still become a non-\li{int}, if some untyped reference to \li{c}
strongly updated \li{x}. In that case, fully checking \li{c} at line
\ref{ex:ant:constr} would have accomplished nothing in terms of
ensuring soundness --- it still needs to be checked again at every
use site. In Anthill, the error in the above program is detected at
the use site on line \ref{ex:ant:get}, when an inserted check attempts
to verify that the result of \li{c.x} is an \li{int}.

\subsection{Dynamic semantics of \upython}

\begin{figure}[bp]
  \centering
  \[
  \begin{array}{lcl}
    v & ::= & n \mid a \mid \fune{\overline{x}}{e} \\
    E & ::= & \chole \mid E(\overline{e}) \mid v(\overline{v},E,\overline{e}) \mid E.\ell \mid E.\ell=e 
    \mid  v.\ell=E \mid \lett{x}{E}{e} \mid \tcheck{E}{S}\mid\\
    && \classe{X}{\overline{v},E,\overline{e}}{\overline{\ell=e}}{e}{} \mid  \classe{X}{\overline{v}}{\overline{\ell=e}}{E}{} \mid \\ && 
    \classe{X}{\overline{v}}{\overline{\ell=v},\ell=E,\overline{\ell=e}}{v}{} \\
    \mu & ::= & \overline{a\mapsto h}\\
    h & ::= & \classh{\overline{a}}{M}{e} \mid \objh{a}{M}\\
    M & ::= & \overline{\ell=v} \\
    r & ::= & (e\mid\mu) \mid \mathtt{casterror} \mid \mathtt{pyerror}
  \end{array}
  \]
  \caption{Runtime syntax for \upython}
  \label{fig:upython-runtime-syntax}
\end{figure}

The runtime behavior of \upython{} is defined in terms of a
single-step reduction semantics using evaluation contexts, shown in
Figure \ref{fig:eval}. The runtime structures (values, heaps, etc.)
are defined in Figure \ref{fig:upython-runtime-syntax}.  The reduction
relation $e\mid\mu\longrightarrow r$ steps from a pair of an
expression $e$ and a heap $\mu$ to some result $r$, which is either an
error or a pair of an expression and a heap. Values are numbers,
functions, and heap addresses. Heaps contain mappings from addresses
$a$ to heap values $h$, which are either classes or objects. Class
heap values contain references to all superclasses and all of the
class' attributes, and objects contain instance variables and a
reference to the object's class. Error results come in two varieties:
\texttt{casterror}, which is the result of a check $\tcheck{v}{S}$
that fails, or \texttt{pyerror}, which is the result of a dynamic type
error (such as, for example, calling a number as though it were a
function). We choose to have such cases reduce to an actual result,
rather than treating them as stuck states, because \upython{} is a
dynamically typed language which contains not just programs translated
from Anthill, but other programs which produce runtime errors; a
\upython{} program that reduces to \texttt{pyerror} is the equivalent
of a Python program that raises a \li{TypeError} exception. As rules
\textsc{EPyError} and \textsc{ECastError} in Figure \ref{fig:eval}
show, both kinds of errors are propagated upwards by the context
rules.

Rules \textsc{ECheck1} and \textsc{ECheck2} show \upython's reduction
rules for \transient{}'s type checks, which use the \textit{check}
metafunction defined in Figure \ref{fig:eval}. In all cases, the
reduction results in either a \texttt{casterror} or the checked value
itself. The cases where the checked value is an address are the most
interesting: class values can be called like functions and read
from/written to like objects, so the check can be successful with both
function and object type tags. The \textit{hasattrs} metafunction,
used for verifying if all attributes in $\delta$ are reachable from a
heap value, and \textit{param-match}, used for checking the arity of a
callable value, are defined in Appendix \ref{apx:full-semantics},
Figure \ref{fig:apx:more-metafunctions}.

Most remaining reduction rules for \upython{} are standard, except
that every case that would be stuck in a statically typed calculus has
a reduction to \texttt{pyerror}. Application requires some work: the
value being called may be a class, in which case an empty object is
created and passed to the object's constructor (\textsc{EApp2}).
Method lookup on objects, shown in rule \textsc{EGet1} and the
\textit{lookup} metafunction, is complex: methods have to be looked up
from a class, curried, and given the receiver as the first
argument. The \textit{getattr} partial function, which traverses the
inheritance hierarchy of its argument to find the definition site of
an attribute, is defined in Appendix \ref{apx:full-semantics}, Figure
\ref{fig:apx:more-metafunctions}.

One quirk to the \textit{lookup} metafunction is that attempting to
access a class' nullary method from an instance leads directly to a
\texttt{casterror}, rather than a \texttt{pyerror}. In Python, method
arities are reduced by one when called from an instance, because the
instance is bound to the first parameter as the receiver. A nullary
method bound like this essentially expects $-1$ arguments: calling it
with zero arguments results in a type error, because too many
arguments were provided. In Reticulated, a transient check on such a
function always fails, but rather than providing uncallable,
``negative-ary'' functions in Anthill, such an evaluation simply
results in \texttt{casterror}.

The semantics of \upython do not refer to casts, types, or other
features related to translation from Anthill, except in the type check
rules and the method access rule (as mentioned above) --- everything
else is as expected for an untyped language. This supports our claim
that \upython{} models Python, and that we do not have to modify the
underlying semantics of Python itself in order to implement
Reticulated with the \transient{} semantics, nor do we require a
complicated implementation of proxies capable of handling Python's
features, as does \guarded{}. The complex parts of \upython's
semantics implement features like method binding, and are not affected
by the existence of runtime type checks.

\begin{figure}[tbp]
\boxed{e\mid\mu\longrightarrow r}
  \begin{mathpar}
    \ninference{EStep}{e\mid\mu\longrightarrow e'\mid\mu'}
              {E[e]\mid\mu \longmapsto E[e']\mid\mu'}
\and
    \ninference{EPyError}{e\mid\mu\longrightarrow \mathtt{pyerror}}
              {E[e]\mid\mu \longmapsto \mathtt{pyerror}}
\and
    \ninference{ECastError}{e\mid\mu\longrightarrow \mathtt{casterror}}
              {E[e]\mid\mu \longmapsto \mathtt{casterror}}
  \end{mathpar}
\newcommand{\steplab}[1]{\scriptsize\textsc{(#1)}}
\[
\begin{array}{lcl}
   \steplab{ECheck1,2}\hfill\tcheck{v}{S}\mid\mu \hfill& \longrightarrow & \left\{
      \begin{array}{p{2.2cm}l}
        $v \mid \mu$ & \text{if }\mathit{check}(v,\mu,S)\\[.5ex]
        $\mathtt{casterror}$ & \text{otherwise}
      \end{array}\right.\\
 \steplab{EApp1,2,3}\hfill v_1(\overline{v_2})\mid\mu & \longrightarrow & \left\{
    \begin{array}{p{2.2cm}l}
      $e[\overline{x/v_2}] \mid \mu$ & \text{if }v_1=\lambda \overline{x}.e\\
      &\text{and } |\overline{x}| =|\overline{v_2}|\\[.5ex]
      $\mathtt{let}~\_=$ & \text{if } v_1=a \\
      \quad$v'(a',\overline{v_2})$ & \text{and } \mu(a)=\classh{\overline{a''}}{M}{v'}\\
      \quad$\mathtt{in}~a' \mid \mu'$& \text{and } a' \text{ fresh}\\
      & \text{and } \mu'=\mu[a'\mapsto \objh{a}{\emptyset}]\\[.5ex]
      $\mathtt{pyerror}$ & \text{otherwise}
    \end{array}\right.\\[.5ex]
  \steplab{ELet}\hfill\lett{x}{v}{e} & \longrightarrow & \quad\;e[x/v]\\[.5ex]
  \overset{\steplab{EClass1,2}\hfill}{\classe{X}{\overline{a}}{M}{v}{} \mid \mu} & \longrightarrow &  \left\{
    \begin{array}{p{2.2cm}l}
      $a'\mid\mu[a'\mapsto h]$ & \text{if } \overline{\mu(a)=\classh{\overline{a''}}{M'}{v'}} \\
      & \text{and } \textit{param-match}(v,\mu,\mathsf{Any})\\
      & \text{and } h=\classh{\overline{a}}{M}{v}\\
      & \text{and } a' \text{ fresh}\\[.5ex]
      $\mathtt{pyerror}$ & \text{otherwise}
    \end{array}\right.\\
  \overset{\steplab{EClass3}\hfill}{\classe{X}{\overline{v_1}}{M}{v_2}{} \mid \mu} & \longrightarrow &
  \begin{array}{ll}
    \;\;\;\mathtt{pyerror} & \quad\;\text{if } \overline{v_1 \neq a}
  \end{array}\\[.5ex]
  \steplab{EGet1,2}\hfill a.\ell\mid\mu & \longrightarrow & \left\{
    \begin{array}{p{2.2cm}l}
      $r$ & \text{if } \mathit{lookup}(a,\mu(a),\ell,\mu)=r\\
      $\mathtt{pyerror}$ & \text{otherwise}
    \end{array}\right.\\
  \steplab{EGet3}\hfill v.\ell\mid\mu & \longrightarrow &
  \begin{array}{ll}
    \;\;\;\mathtt{pyerror} & \quad\;\text{if } v\neq a
  \end{array}\\
   \steplab{ESet1,2,3}\hfill a.\ell=v\mid\mu & \longrightarrow & \left\{
    \begin{array}{p{2.2cm}l}
      $0\mid\mu[a\mapsto h']$ & \text{if } \mu(a)=\objh{a'}{M} \\
      & \text{and } h'=\objh{a'}{M[\ell=v]}\\[.5ex]
      $0 \mid \mu[a\mapsto h']$ & \text{if } \mu(a)=\classh{\overline{a'}}{M}{v'}\\
      & \text{and } h'=\classh{\overline{a'}}{M[\ell=v]}{v'}\\[.5ex]
      $\mathtt{pyerror}$ & \text{otherwise}
    \end{array}\right.\\
  \steplab{ESet4}\hfill v_1.\ell=v_2\mid\mu & \longrightarrow &
  \begin{array}{ll}
    \;\;\;\mathtt{pyerror} & \quad\;\text{if } v_1\neq a
  \end{array}\\
\end{array}
\]
\boxed{e\mid\mu\longrightarrow^{*}r}
\begin{mathpar}
  \fninference{MRefl}{ }{e\mid\mu\longrightarrow^{*}e\mid\mu}\and
  \fninference{MPyErr}{e\mid\mu\longmapsto\mathtt{pyerror}}{e\mid\mu\longrightarrow^{*}\mathtt{pyerror}}\and
  \fninference{MCastErr}{e\mid\mu\longmapsto\mathtt{casterror}}{e\mid\mu\longrightarrow^{*}\mathtt{casterror}}\and
  \fninference{MChain}{e\mid\mu\longmapsto e'\mid\mu' \\ e'\mid\mu'\longrightarrow^{*}r}{e\mid\mu\longrightarrow^{*}r}
\end{mathpar}
  \caption{\upython{} evaluation rules}
  \label{fig:eval}
\end{figure}

\begin{figure}[tbp]
\boxed{\mathit{check}(v,\mu,S)}
  \begin{mathpar}
    \inferrule{ }{\mathit{check}(v,\mu,\bdyn)}\and
    \inferrule{ }{\mathit{check}(n,\mu,\mathsf{int})}\and
    \inferrule{\mathit{hasattrs}(a,\delta,\mu)}{\mathit{check}(a,\mu,\bobjty{\delta})}\and
    \inferrule{|\overline{x}|=n}{\mathit{check}(\lambda \overline{x}.e, \mu, \bto{n})}\and
    \inferrule{\mu(a)=\classh{\overline{a'}}{M}{v} \\\\ \textit{param-match}(v,\mu,n+1)}{\mathit{check}(a, \mu, \bto{n})}\and
    \inferrule{\mu(a)=\classh{\overline{a'}}{M}{v} \\\\\textit{param-match}(v,\mu,C) \\ \mathit{hasattrs}(a,\delta,\mu)}{\mathit{check}(a,\mu,\bclassty{\delta}{C})}
  \end{mathpar}
\boxed{\mathit{lookup}(a,h,\ell,\mu)=r}
  \begin{mathpar}
    \inferrule{M(\ell)=v}{\mathit{lookup}(a,\objh{a'}{M},\ell,\mu)=v\mid\mu}\and
    \inferrule{\textit{getattr}(a,\ell,\mu)=v}{\mathit{lookup}(a,\classh{\overline{a'}}{M}{v'},\ell,\mu)=v\mid\mu}\and
    \inferrule{\ell\not\in\mathit{dom}(M) \\\\ \mathit{getattr}(a,\ell,\mu)=v \\ v\neq\lambda\overline{\ell}.e}{\mathit{lookup}(a,\objh{a'}{M},\ell,\mu)=v\mid\mu}\and
    \inferrule{\ell\not\in\mathit{dom}(M)\\\\\mathit{getattr}(a,\ell,\mu)=v \\ v=\lambda\overline{x}.e \\ |\overline{y}|=|\overline{x}|-1}{\mathit{lookup}(a,\objh{a'}{M},\ell,\mu)=\lambda\overline{y}.v(a,\overline{y})\mid\mu}\and
    \inferrule{\ell\not\in\mathit{dom}(M) \\ \mathit{getattr}(a,\ell,\mu)=v \\ v=\lambda\overline{x}.e \\ |\overline{x}| =0}{\mathit{lookup}(a,\objh{a'}{M},\ell,\mu)=\mathtt{casterror}}
  \end{mathpar}
\caption{Relations used in \upython{} evaluation}
\label{fig:lookup}
\end{figure}

\section{Open World Soundness of Anthill Python}
\label{sec:proofs}

With static and dynamic semantics for Anthill Python, we now wish to
show that Anthill admits the property of \emph{open world soundness.}
That is, an Anthill program, translated into \upython, can interact
with other \upython{} code without the translated Anthill code causing
any \texttt{pyerror}s. However, in our presentation so far \upython{}
does not distinguish between translated and untranslated code, so we
begin by introducing \emph{origin tracking} to \upython{}
expressions. This indicates whether an expression originated in
typed or untyped code.

Since \upython{} is a dynamically typed language, there is nothing
preventing programs from resulting in runtime errors. For example,
consider the following \upython{} program, in which a function \li{f}
calls its argument on the integer $42$.

\begin{lstlisting}
let f = ((*$\lambda$*) v. v(42)) in (*\label{ex:ow:pe}*)
f(21) (*\label{ex:ow:bc}*)
\end{lstlisting}
Since \li{f} is passed $21$ at line \ref{ex:ow:bc}, this program will
result in a \texttt{pyerror} at the call site in line \ref{ex:ow:pe}
by rule \textsc{EApp3}. Suppose instead that \li{f} was typed Anthill
Python code rather than \upython{}:

\begin{lstlisting}
let f = ((*$\lambda$*) v:Callable([int], int). v(42)) in
...
\end{lstlisting}
This version specifies that \li{f}'s parameter be a function. It can
then be translated into \upython{} and used in place of the original \li{f}:

\begin{lstlisting}
let f = ((*\makebox[0pt][l]{\color{\cicolor}\rule[-1ex]{10.5em}{2.6ex}}$\lambda$*) v. let v = v(*$\Downarrow_{1\to}$*) in (*\label{ex:owt:ce}*)
                (*\makebox[0pt][l]{\color{\cicolor}\rule[-0.5ex]{5.1em}{2.4ex}}*)(v(42))(*$\Downarrow_{\text{int}}$*)) in 
f(21)
\end{lstlisting}
The result is a \upython{} program that is partially translated, typed
Anthill code (highlighted in \colorbox{\cicolor}{yellow}) and partially code that originates in \upython{}. In this
case, the result of the program is no longer \texttt{pyerror}, but a
\texttt{casterror} at line \ref{ex:owt:ce}, which indicates that
\li{f} has been passed something of the wrong type. Open world
soundness guarantees that mixed programs like this can only result in
\texttt{pyerror} in those sections that originated in \upython{}.

To reason about such mixed programs, we describe portions of a program
as originating in either Anthill or \upython. Consider again the
example from above, but with origin made explicit:

\begin{lstlisting}
let f = ((*\makebox[0pt][l]{\color{\cicolor}\rule[-1ex]{10.5em}{2.6ex}}$\lambda$*) v. let v = v(*$\Downarrow_{1\to}$*) in
                (*\makebox[0pt][l]{\color{\cicolor}\rule[-0.5ex]{5.6em}{2.4ex}}*)(v(42)(*$^\circ$*))(*$\Downarrow_{\text{int}}$*)) in (*\label{ex:owl:ty}*)
f(21)(*$^\bullet$*)(*\label{ex:owl:un}*)
\end{lstlisting}

The call site at \ref{ex:owl:ty} is labeled $\circ$ to indicate that
it originated Anthill, and should never result in
\texttt{pyerror}. The call site at line \ref{ex:owl:un}, on the other
hand, is labeled $\bullet$ to indicate that it originated in \upython,
and \emph{can} result in \texttt{pyerror}.

In the remainder of this section, we further develop this notion of
origin, and use it in defining a type system for \upython, which
enforces restrictions on code that originated in Anthill while being
permissive for code that does not. We then use this type system to
prove open world soundness.

\subsection{Expression origin}

\newcommand{\cc}{\mathcal{C}}

\begin{figure}
  \centering
  \[
  \begin{array}{lcl}
    p & ::= & \circ \mid \bullet \\
    e & ::= & \ldots \mid e(\overline{e})^p \mid e.\ell^p \mid e.\ell^p=e \mid  \classe{X}{\overline{e}}{\overline{\ell=e}}{e}{p}\\ 
    r & ::= & \ldots \mid \mathtt{pyerror}(p)\\
    \Sigma & ::= & \overline{a{:}S}\\
    \cc & ::= & \chole \mid \cc(\overline{e})^\bullet \mid e(\overline{e},\cc,\overline{e})^\bullet \mid \cc.\ell^\bullet \mid \cc.\ell^\bullet=e 
    \mid e.\ell^\bullet=\cc \mid \classe{X}{\overline{e},\cc,\overline{e}}{\overline{\ell=e}}{e}{\bullet} \mid \\
    && \classe{X}{\overline{e}}{\overline{\ell=e}}{\cc}{\bullet} \mid  \classe{X}{\overline{v}}{\overline{\ell=e},\ell=\cc,\overline{\ell=e}}{e}{\bullet} \mid \\
    && \lett{x}{\cc}{e} \mid \lett{x}{e}{\cc} \mid \tcheck{\cc}{S} \mid 
    \lambda \overline{x}.\;\cc
  \end{array}
  \]
  \caption{Labeled syntax and contexts for \upython}
  \label{fig:upython-labeled-syntax}
\end{figure}

Figure \ref{fig:upython-labeled-syntax} shows the syntax of
\upython{} extended with origin labels $p$. In this system,
there are only two labels: $\circ$, for code translated from
well-typed Anthill Python programs, and $\bullet$, for
code that originates in \upython{} and is not statically typed.

Unlike typical blame labels in contracts and gradual typing, which
attach to casts or contract monitors \cite{Ahmed:2011fk,
  Findler:2002eu, Takikawa:2012ly}, these labels are attached to
expressions, and specifically to elimination forms that can result in
\texttt{pyerror}. (No labels are attached to, for example,
let-binding, because let-binding cannot on its own lead to a
\texttt{pyerror}.)  We also change the \texttt{pyerror} result so that
it reports the origin of the expression that triggered the error. Code
translated from Anthill is \emph{always} labeled with $\circ$, while
\upython{} programs entirely labeled with $\bullet$ represent untyped
Python code. The version of \upython's dynamic semantics that includes
origin is shown in Appendix \ref{apx:full-semantics}, Figure
\ref{fig:apx:eval}; it differs from that shown in Figure
\ref{fig:eval} only in its use and propagation of labels.

\subsection{Applying types to \upython}

In addition to indicating whether a dynamic type error occurred in
typed or untyped code, origin labels also let us define a type system
for \upython{} in order to state and prove open world soundness. This
type system relates expressions to type tags $S$, as defined in Figure
\ref{fig:upython-syntax}, which are repurposed as types.  An
illustrative excerpt of this type system is shown in Figure
\ref{fig:typing-excerpt-upython}. Two rules are provided for each
labeled expression, one for $\circ$ and one for $\bullet$. The
$\bullet$ rule requires that all subexpressions be typed as \bdyn,
which is the top of the subtyping hierarchy for $S$ as shown in Figure
\ref{fig:relations}, so no programs can be rejected unless they
contain $\circ$-labeled expressions. The $\circ$ rules are more
restrictive --- a $\circ$-labeled expression is well-typed when it
cannot step to \texttt{pyerror}, and so $\circ$ rules require that
subexpressions have the specific types necessary to ensure that. All
expressions of either variety, other than checks and introduction
forms, are typed as \bdyn. This ensures runtime type checks exist in
$\circ$-labeled code, because type checks are the only way to obtain
expressions with precise types like \textsf{int} and $\bto{n}$ (other
than introduction forms like numbers or lambdas).

In rule \textsc{TCheck}, the type of a check expression is its type
tag. In \textsc{TApp} we require that, in a call labeled with
$\circ$, the callee has a function type. The only way for it to have
a function type is if it is a bare lambda, if it is a variable bound
by \textsc{TLet}, or if it is some other expression nested within a
check (typed by \textsc{TCheck}). By contrast,
\textsc{TApp-Dyn} places no requirements on the types of
subexpressions of an untyped, $\bullet$-labeled application. Even if a
subexpression has a more specific type, \bdyn~is the top of the
subtyping hierarchy, so by \textsc{TSubsump} any well-typed
expression can appear in function position. Therefore, an obviously
ill-typed program like \li{4(2)} (calling $4$ as though it were a
function) will be ill-typed if the call is labeled with $\circ$, but
allowed if it is labeled with $\bullet$.  The full type system for
\upython{} with labels is shown in Appendix \ref{apx:full-semantics},
Figure \ref{fig:apx:upython-blame-typing}.

\begin{figure}
\boxed{\tlg\vdash e:S}
\begin{mathpar}
\ninference{TSubsump}{\tlg\vdash e:S_2 \\ S_2 <: S_1}{\tlg\vdash e:S_1}\and
\ninference{TCheck}{\tlg\vdash e:\bdyn}{\tlg\vdash \tcheck{e}{S} : S}\and
\ninference{TLet}{\tlg\vdash e_1:S_1 \\ \Gamma,x{:}S_1{\mid}\Sigma\vdash e_2:S_2}{\tlg\vdash \lett{x}{e_1}{e_2} : S_2}\and
\ninference{TApp}{\tlg\vdash e_1:\bto{n} \\ \tlg\vdash \overline{e_2:\bdyn} \\ |\overline{e_2}| =n}{\tlg\vdash e_1(\overline{e_2})^\circ : \bdyn}\and
\ninference{TApp-Dyn}{\tlg\vdash e_1:\bdyn \\ \tlg\vdash \overline{e_2:\bdyn}}{\tlg\vdash e_1(\overline{e_2})^\bullet : \bdyn}
\end{mathpar}
\boxed{S <: S}\vspace{-2ex}
  \begin{mathpar}
    S <: \bdyn \and \texttt{int} <: \texttt{int}\and
 \inference{S_1 <: S_2 \\ S_2 <: S_3}{S_1 <: S_3}
\and
    \inference{\delta_2 \subseteq \delta_1}{\bobjty{\delta_1} <: \bobjty{\delta_2}}\and
    \inference{ }{\bclassty{\delta}{n} <: \bclassty{\delta}{\mathsf{Any}}}\and
    \inference{\delta_2 \subseteq \delta_1}{\bclassty{\delta_1}{C} <: \bclassty{\delta_2}{C}}\and
    \inference{ }{\bclassty{\delta}{C} <: \bobjty{\delta}}\and
    \inference{ }{\bclassty{\delta}{n} <: \bto{n}}
  \end{mathpar}
  \caption{Excerpt of type system for typed expressions in \upython}
  \label{fig:typing-excerpt-upython}
\end{figure}

\subsection{Code contexts allow embedding typed code in untyped}

To reason about Anthill-translated code interacting with other
\upython{} code, we use code contexts
$\cc$\cite{Harper:2012aa}. Contexts are defined in Figure
\ref{fig:upython-labeled-syntax}; in this work we are concerned with
embedding typed code in untyped contexts, so these contexts are
$\bullet$-labeled. Code contexts are typed using the judgment
$\cc:\Gamma{;}S_1\Rightarrow \Gamma'{;}S_2$, where if the hole in a
context $\cc$ is filled by an expression of type $S_1$ under $\Gamma$,
then the result is an expression of type $S_2$ under $\Gamma'$. We write $\cc[e]$ for
the composition of a context and an expression, which is itself an
expression. For example, $\chole:\Gamma{;}S\Rightarrow \Gamma{;}S$,
because a hole filled by an expression is just that expression.  The
full rules for typing contexts are given in Appendix
\ref{apx:full-semantics}, Figure \ref{fig:apx:context-typing}.

\subsection{Open world soundness}

Armed with notions of origination and a type system for \upython, we
can now describe the proof of open world soundness and its key lemmas.

First, if an Anthill term $t$ has type $A$ under $\Gamma$ and is
translated into a \upython{} expression $e$, then $e$ has type
$\flr{A}$ under $\flr{\Gamma}$.\footnote{Here, $\flr{\Gamma}$ is the
  result of applying $\flr{\tsym}$ (see Figure \ref{fig:relations}) to
  all the types in $\Gamma$.}

\begin{lemma}[Anthill translation preserves typing]\hfill
  \begin{itemize}
  \item If $\Gamma\vdash t\leadsto e:\tsym$, then $\flr{\Gamma};\emptyset\vdash
  e:\lfloor \tsym\rfloor$.
\item If $\Gamma\vdash c\leadsto e: \overline{\tsym}$, then
  $\flr{\Gamma};\emptyset\vdash e:\bto{|\overline{\tsym}|}$.
\item If $\Gamma{;} \tsym_1 \vdash d \leadsto e:\tsym_2$, then
  $\flr{\Gamma};\emptyset\vdash e:\flr{\tsym_2}$
  \end{itemize}
\end{lemma}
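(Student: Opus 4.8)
The plan is to proceed by a single simultaneous induction over the three translation derivations --- terms ($\Gamma \vdash t \leadsto e : A$), constructors ($\Gamma \vdash_\sigma c \leadsto e : \overline{A}$), and methods ($\Gamma; A \vdash_\varsigma m \leadsto e : A$). These judgments are mutually referential: \textsc{IClass} appeals to both the constructor and the method judgment (and back to the term judgment for fields and superclasses), so the three statements must be strengthened into one mutual induction hypothesis and discharged together. Before the case analysis I would record three facts that the $\flr{\cdot}$ translation makes available. First, $\flr{\Gamma}(x) = \flr{\Gamma(x)}$, so a translated variable carries its floored type directly under the variable rule. Second, $\flr{\cdot}$ commutes with member extraction: for object and class types, the attribute-name set of $\flr{A}$ is exactly $\flr{\textit{mems}(A)}$, which is immediate from the definitions of $\flr{\cdot}$, $\textit{mems}$, and $\flr{\Delta}$. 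Third, two subtyping facts, $\flr{A} <: \bdyn$ (every floored type sits below the dynamic tag) and $\bclassty{\delta}{n} <: \bto{n}$, which let class values stand in function position.

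The heart of the argument is a single recurring pattern dictated by the translation. Every value-producing elimination form is wrapped in a check $\tcheck{\cdot}{\flr{A}}$, and \textsc{TCheck} assigns precisely the tag $\flr{A}$ to such an expression provided its subject is typeable at $\bdyn$ --- which always holds by \textsc{TSubsump} and $\flr{A} <: \bdyn$. Dually, every binder (a function or method parameter, a let-bound variable, a receiver) is introduced at $\bdyn$ and then immediately rebound through $\text{let } x = \tcheck{x}{\flr{A}} \text{ in } \cdots$, so that the IH's premise environment $\flr{\Gamma, \overline{x{:}A}}$ is reconstructed exactly before the translated body is typed. With this pattern in hand the elimination cases are routine: in \textsc{IGet} the inner read $e.\ell^\circ$ is well-typed at $\bdyn$ because (fact two) $\ell$ lies in the attribute set of $\flr{A_1}$, and \textsc{TCheck} then yields $\flr{A_2}$; \textsc{IGet-Check}, \textsc{ISet}, and \textsc{ISet-Check} are analogous, using the object tags $\bobjty{\ell}$ and $\bobjty{\emptyset}$ produced by their checks; \textsc{IApp-Dyn} and \textsc{IApp-Fun} use \textsc{TApp} after the callee is checked (or already introduced) at $\bto{n}$, and \textsc{IApp-Constr} additionally invokes $\bclassty{\delta}{n} <: \bto{n}$; \textsc{IFun} and \textsc{ILet} follow by the binder-rebinding observation. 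Crucially, the subtype-consistency and consistency side-conditions ($\lesssim$, $\sim$) may be discarded throughout: type tags record only arities and attribute-name sets, never argument, return, or field types, so these premises --- needed for Anthill's own static guarantees --- are irrelevant to \upython{} well-typedness.

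The class case \textsc{IClass} is where the three parts interlock and where I would spend the most care. Each translated superclass $e_s'=\tcheck{e_s}{\bclassty{\flr{\textit{mems}(A_s)}}{\mathsf{Any}}}$ is typed at its class tag by \textsc{TCheck} (using the IH for $e_s$ and $\flr{A_s} <: \bdyn$); fields are typed by the term IH (part one); methods by the method IH (part three); and the constructor by the constructor IH (part two). The \upython{} class-introduction rule then assembles these into the target tag $\bclassty{\flr{\Delta_1}}{|\overline{A_c}|}$, and here the \textsc{IClass} premise $\forall x \in \textit{dom}(\Delta_1),\ (\cdots)(x) \lesssim \Delta_1(x)$ is used only to witness that the declared attribute-name set $\flr{\Delta_1}$ is covered by the fields, methods, and inherited attributes, justifying the attribute component of the tag.

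The step I expect to be the genuine obstacle is the accounting for the receiver parameter in \textsc{IConstruct} and \textsc{IMethod}. Both translate to lambdas $\fune{x_s, \overline{x}}{\cdots}$ that carry the receiver $x_s$ as an extra leading parameter, so the raw arity of the produced lambda exceeds the declared parameter count $|\overline{A_1}|$ by one. The target tags, by contrast, describe the \emph{instance view}: a method's floored type is $\bto{|\overline{A_1}|}$ (the receiver is already bound when a method is read from an object, matching the currying $\lambda\overline{y}.v(a,\overline{y})$ in \textit{lookup}), and the class tag records constructor arity $|\overline{A_c}|$ while $\textit{param-match}$ expects a constructor of raw arity $|\overline{A_c}|+1$. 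Lining up this $+1$ --- so that the arity delivered by the lambda rule on the raw constructor or method lambda matches, once the receiver is deducted, the arity recorded in the corresponding tag --- is the delicate bookkeeping of the proof; once the receiver is treated consistently as the distinguished leading parameter throughout, the remaining reasoning is the routine check-and-rebind pattern above.
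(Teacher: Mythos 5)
Your proposal takes essentially the same route as the paper: a mutual induction on the three translation judgments, with the check-wrapping and binder-rebinding pattern carrying every case. The only ingredient the paper names explicitly that you leave implicit is an environment-weakening lemma, invoked in the \textsc{IFun}, \textsc{IMethod}, and \textsc{IConstruct} cases to reconcile the $\bdyn$-typed lambda binders of \textsc{TFun} with the floored-type environment $\flr{\Gamma,\overline{x{:}\tsym_1}}$ demanded by the induction hypothesis after the $\mathtt{let}~x=\tcheck{x}{\flr{\tsym_1}}$ rebindings.
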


If $e$ is plugged in a context
$\cc:\flr{\Gamma};\flr{A}\Rightarrow\emptyset;S$, then the resulting
term $\cc[e]$ has type $S$.

\begin{lemma}[Composition]
  If $\cc:\Gamma;S\Rightarrow \Gamma';S'$ and
  $\Gamma;\emptyset\vdash e:S$, then $\Gamma';\emptyset\vdash \cc[e]:S'$.
\end{lemma}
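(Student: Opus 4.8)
The plan is to proceed by structural induction on the derivation of the context typing judgment $\cc:\Gamma;S\Rightarrow\Gamma';S'$ — equivalently, on the structure of $\cc$, since the context typing rules of Figure~\ref{fig:apx:context-typing} are syntax-directed. The statement is phrased precisely so that the induction hypothesis is what is needed at each step: for the immediate sub-context, a hole-filling of the right type yields a well-typed result. The store typing is $\emptyset$ throughout and is left untouched by every rule, so it plays no role. In the base case, when $\cc=\chole$, the only applicable context rule forces $\Gamma'=\Gamma$ and $S'=S$, and $\cc[e]=e$; the goal is then literally the hypothesis $\Gamma;\emptyset\vdash e:S$.

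In the inductive step, for each remaining context form I would invert the context typing derivation. Each such rule is designed to mirror the corresponding expression typing rule of Figure~\ref{fig:apx:upython-blame-typing} with one immediate subexpression replaced by a hole, and so it supplies both (i) a context typing judgment $\cc_0:\Gamma;S\Rightarrow\Gamma_0;S_0$ for the immediate sub-context and (ii) expression typing judgments for the surrounding sub-expressions. I would apply the IH to (i) together with the hypothesis on $e$ to obtain $\Gamma_0;\emptyset\vdash\cc_0[e]:S_0$, and then reassemble $\cc[e]$ using the matching expression typing rule and the judgments from (ii). For the $\bullet$-labeled elimination contexts (application, attribute read and write, and the class forms) this matching rule is the $\bullet$-rule, which only requires each immediate subexpression to have type $\bdyn$; since $\bdyn$ is the top of the subtyping order ($S_0 <: \bdyn$ by Figure~\ref{fig:relations}), \textsc{TSubsump} closes any gap between $S_0$ and $\bdyn$. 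For the $\tcheck{\cc}{S}$, let, and lambda contexts the matching rules are \textsc{TCheck}, \textsc{TLet}, and the lambda introduction rule; these are exactly the places where a precise output type is produced (respectively $S$, the body's type, and $\bto{n}$ for a lambda of $n$ parameters).

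The cases that demand genuine care are the binding contexts $\lambda\overline{x}.\cc$, $\lett{x}{\cc}{e}$, and $\lett{x}{e}{\cc}$, where the environment at the hole differs from the environment of the whole context. Here the context typing rule must extend $\Gamma$ with the bound variables before describing the sub-context, so the IH is invoked in the extended environment $\Gamma_0$, and I must verify that $\Gamma_0$ is precisely the environment under which the reassembling expression rule expects the filled sub-expression to be typed.

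I do not expect any deep obstacle: the entire content of the argument is that each context typing rule was defined to expose exactly the premises that the corresponding expression typing rule consumes, so the induction is a mechanical rule-by-rule match, with subsumption absorbing the imprecision at $\bullet$-labeled forms and the binding cases requiring only attention to the environment bookkeeping. The main effort is simply enumerating the cases faithfully, which is why this lemma is a natural candidate for the Coq mechanization the paper describes.
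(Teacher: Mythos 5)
Your proposal is correct and follows exactly the paper's proof, which is simply ``by induction on $\cc:\Gamma{;}S\Rightarrow \Gamma'{;}S'$''; your case analysis (hole as base case, each context rule mirroring its expression typing rule, subsumption absorbing the $\bdyn$ gap at $\bullet$-labeled forms, and environment extension at the binders) is the intended elaboration of that one-line sketch.
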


Finally, when $\cc[e]$ is evaluated, if it
does not diverge it will result in a value of type $S$, a
\texttt{casterror}, or a \texttt{pyerror} that points to untyped code
as the source of the error (i.e. $e$ is not responsible).

\begin{lemma}[Preservation]
  If $\emptyset;\Sigma \vdash e:S$, $\Sigma\vdash\mu$, and
  $e\mid\mu \longrightarrow e'\mid\mu'$, then
  $\emptyset;\Sigma'\vdash e':S$ and $\Sigma'\vdash\mu'$ and
  $\Sigma \sqsubseteq \Sigma'$.
\end{lemma}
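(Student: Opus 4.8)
The statement concerns the notion-of-reduction relation $\longrightarrow$ of Figure \ref{fig:eval}, so with $\Gamma=\emptyset$ the expression $e$ is itself the redex being contracted, and the plan is to proceed by case analysis on which base reduction rule fires, inverting the derivation $\emptyset;\Sigma\vdash e:S$ in each case. Before the case split I would prove the standard supporting lemmas: a substitution lemma (if $\Gamma,x{:}S_1;\Sigma\vdash e:S_2$ and $\Gamma;\Sigma\vdash v:S_1$ then $\Gamma;\Sigma\vdash e[x/v]:S_2$), a heap-typing monotonicity/weakening lemma (if $\Gamma;\Sigma\vdash e:S$ and $\Sigma\sqsubseteq\Sigma'$ then $\Gamma;\Sigma'\vdash e:S$), and inversion/canonical-forms facts for the precise tags (a value of type $\bto{n}$ is an $n$-ary $\lambda$ or a class address whose constructor satisfies \textit{param-match}; a value of type $\bobjty{\delta}$ or $\bclassty{\delta}{C}$ is an address, recovering its tag from $\Sigma$; and so on). Note that every rule whose result is \texttt{pyerror} or \texttt{casterror} --- including the failing branch of \textsc{ECheck} and the negative-arity case of \textit{lookup} --- does not match the hypothesis $e\mid\mu\longrightarrow e'\mid\mu'$, so those branches are vacuous for preservation and are the concern of the separate progress argument.

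The structural cases are then routine. For \textsc{ELet} the heap is unchanged, $\Sigma'=\Sigma$, and the contractum is typed by substitution. For \textsc{EApp1} I invert the application (under \textsc{TApp} or \textsc{TApp-Dyn}) to get $\overline{e_2:\bdyn}$ and the callee at $\bto{n}$, invert the $\lambda$-rule to obtain its body at $\bdyn$ with parameters at $\bdyn$, and conclude the substituted body has type $\bdyn=S$. Successful \textsc{ECheck} contracts $\tcheck{v}{S}$ to $v$ with the heap fixed; here I need the key \emph{check-soundness} lemma: if $\Sigma\vdash\mu$ and $\textit{check}(v,\mu,S)$ holds, then $\emptyset;\Sigma\vdash v:S$, which upgrades the $\bdyn$-typing of $v$ furnished by \textsc{TCheck} to the precise tag $S$. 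The value-returning cases of \textsc{EGet1} leave the heap fixed and require the looked-up value at $S=\bdyn$, which comes directly from the heap invariant. The allocation rules \textsc{EApp2} and \textsc{EClass1} introduce a fresh $a'$; there I take $\Sigma'=\Sigma$ extended with $a'{:}\bobjty{\emptyset}$ (resp.\ $a'{:}\bclassty{\delta}{C}$), giving $\Sigma\sqsubseteq\Sigma'$, re-establish $\Sigma'\vdash\mu'$ by monotonicity on the untouched heap plus the evident fit of the new cell, and type the contractum (the \texttt{let} whose discarded binding is the constructor call $v'(a',\overline{e_2})$ and whose body is $a'$, or $a'$ itself) at $\bdyn$ via \textsc{TSubsump}. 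The mutation rules \textsc{ESet1,2} keep $\Sigma'=\Sigma$: the contractum is $0$ (typeable at $\mathsf{int}$, hence $\bdyn$), and $\Sigma\vdash\mu'$ survives because a strong update only ever enlarges the attribute set of $\mu(a)$ and stores a value already typeable at $\bdyn$, so the subset constraint recorded by the tag $\bobjty{\delta}$ is preserved.

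The heart of the argument, and the step I expect to be hardest, is defining the heap well-typedness relation $\Sigma\vdash\mu$ so that it is at once strong enough to prove check-soundness and stable under every reduction. Because the tags $S$ are deliberately shallow --- recording only attribute names and arities, never the types of contents --- $\Sigma\vdash\mu$ must assert (i) that $\Sigma$ and $\mu$ agree on domains, (ii) that every value stored in an object's or class's attribute map, every superclass reference, and every constructor is itself well-typed (at $\bdyn$, resp.\ as a class or function), and (iii) the delicate faithfulness condition that the attribute set named in $\Sigma(a)$ is genuinely reachable from $a$ in $\mu$, which for an object means tracing the class pointer through the entire inheritance chain via \textit{getattr}. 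The check-soundness lemma then bites precisely on the address cases of \textit{check}: from $\textit{hasattrs}(a,\delta,\mu)$ I must recover $\Sigma(a)<:\bobjty{\delta}$ so that \textsc{TSubsump} yields $a:\bobjty{\delta}$, and from \textit{param-match} on the constructor I must recover the $\bto{n}$ and $\bclassty{\delta}{C}$ facts; all of this only goes through if (iii) keeps $\Sigma$ an honest lower bound on the heap's reachable attributes, which is exactly what must be shown to be re-established after the attribute-growing \textsc{ESet} cases and the hierarchy-extending \textsc{EClass1} case.

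A secondary but genuine difficulty is that several reductions manufacture expressions that the Anthill translation never produces: \textsc{EApp2} builds the constructor application inside a \texttt{let}, and the method-binding case of \textit{lookup} builds the curried wrapper $\lambda\overline{y}.v(a,\overline{y})$. I must verify that, under the labelled dynamic semantics of Appendix \ref{apx:full-semantics}, these synthesised forms are emitted with labels (the permissive $\bullet$ suffices, since nothing forces them to avoid \texttt{pyerror}) that let them be typed at $\bdyn$ given the heap invariant, so that preservation is not broken by freshly generated redexes. With $\Sigma\vdash\mu$ pinned down once and for all and these synthesised forms accounted for, the remaining obligations are uniform inversions, and the three conclusions $\emptyset;\Sigma'\vdash e':S$, $\Sigma'\vdash\mu'$, and $\Sigma\sqsubseteq\Sigma'$ fall out case by case.
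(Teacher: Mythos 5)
Your proposal matches the paper's proof: the paper likewise proceeds by induction on the reduction $e\mid\mu\longrightarrow e'\mid\mu'$, invoking substitution, heap weakening, canonical forms, and lemmas connecting the runtime predicates to the static tags (its Technical Lemmas on \textit{hasattrs} and \textit{param-match} are exactly your ``check-soundness'' lemma, split into the two address cases of \textit{check}), with the heap invariant $\Sigma\vdash\mu$ defined as you describe via per-address judgments requiring \textit{hasattrs}, \textit{param-match}, and $\bdyn$-typing of stored values. Your attention to the synthesized redexes (the \texttt{let} built by \textsc{EApp2} and the curried method wrapper from \textit{lookup}) is a genuine obligation the paper's sketch leaves implicit, but it does not change the approach.
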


\begin{lemma}[Progress with no typed pyerrors]
  If $\emptyset;\Sigma \vdash e:S$ and $\Sigma\vdash\mu$, then either $e$ is a value or $e\mid\mu\longrightarrow r$ and either:
  \begin{itemize}
  \item $r=\mathtt{pyerror}(\bullet)$, or
  \item $r=\mathtt{casterror}$, or
  \item $r=e'\mid\mu'$.
  \end{itemize}
\end{lemma}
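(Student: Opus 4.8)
The plan is to prove the lemma by induction on the derivation of $\emptyset;\Sigma\vdash e:S$. Because the conclusion does not constrain the resulting type, the \textsc{TSubsump} case is immediate: a subsumption step leaves $e$ unchanged, so the induction hypothesis applied to the premise $\emptyset;\Sigma\vdash e:S_2$ already gives the result. The introduction forms (numbers, bare lambdas, and heap addresses) are values, discharging those cases directly. Every remaining rule corresponds to a composite expression --- an application, attribute read or write, \texttt{let}, check, or class construction --- and for each I would proceed by examining its subexpressions in left-to-right evaluation order, splitting on whether they are all values (a redex) or not (a propagation step).

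For the propagation case I would use the evaluation-context grammar of Figure \ref{fig:upython-runtime-syntax}: if $e_0$ is the first subexpression in evaluation position that is not a value, then $e = E[e_0]$ for the corresponding context $E$, and the premise of the current rule supplies a typing derivation for $e_0$ to which I apply the induction hypothesis. If $e_0$ steps to $e_0'\mid\mu'$, rule \textsc{EStep} gives a step of the whole expression; if it steps to \texttt{casterror} or to \texttt{pyerror}$(\bullet)$, rules \textsc{ECastError} and \textsc{EPyError} propagate exactly that result. The essential point, which makes this case uniform across $\circ$- and $\bullet$-labeled outer forms, is that the induction hypothesis guarantees $e_0$ never produces \texttt{pyerror}$(\circ)$; hence any pyerror forwarded to the top carries the $\bullet$ label and is permitted.

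The heart of the argument is the redex case, where all subexpressions are values. For a $\bullet$-labeled redex the reduction rules always fire --- possibly down the ``otherwise'' branch producing \texttt{pyerror}$(\bullet)$ --- and since that label is allowed there is nothing more to show. For a $\circ$-labeled redex I would invoke a canonical-forms lemma to rule out the pyerror branches. For instance, in the \textsc{TApp} case the callee $v_1$ has type $\bto{n}$ and there are exactly $n$ value arguments; inverting the subtyping relation into $\bto{n}$ shows $v_1$ is either a lambda of arity $n$, so \textsc{EApp1} fires, or a class address, so \textsc{EApp2} fires unconditionally --- the \textsc{EApp3} pyerror branch is never reached. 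The attribute-access, attribute-update, and class-construction cases are analogous: the restrictive $\circ$ typing premises, together with $\Sigma\vdash\mu$, force the subjects to be addresses whose heap contents satisfy \textit{hasattrs}, \textit{getattr}, or \textit{param-match}, so that \textsc{EGet}, \textsc{ESet}, and \textsc{EClass} take their successful branches. Outcomes of \texttt{casterror} --- a failing check, or the nullary-method case of \textit{lookup} --- are explicitly allowed by the lemma, so they need no special treatment.

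The step I expect to be the main obstacle is the canonical-forms lemma together with the precise formulation of heap well-typedness $\Sigma\vdash\mu$. The subtyping relation admits width subtyping on object and class tags as well as the class-to-function and class-to-object coercions, so characterizing the shape of a value at a given tag requires careful inversion of $<:$ rather than a single inversion of a typing rule. More delicate still, $\Sigma\vdash\mu$ must be strong enough that a declared attribute really is reachable at runtime: since \textit{getattr} traverses the inheritance hierarchy, the invariant has to guarantee recursively that every superclass address stored in a class heap value is itself well-typed and that the attributes promised by a tag are found somewhere along the chain. Getting this invariant right --- so that it is both maintained under reduction (as the preservation lemma requires) and sufficient to drive the metafunctions used in progress --- is the crux of the development and the part the Coq formalization must pin down most carefully.
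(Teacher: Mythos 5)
Your proposal matches the paper's proof: both proceed by induction on the typing derivation $\emptyset;\Sigma\vdash e:S$, dispatch the $\bullet$-labeled cases trivially because the catch-all reduction rules yield $\mathtt{pyerror}(\bullet)$, which the lemma permits, and handle the $\circ$-labeled redexes via canonical forms (the paper's Technical Lemma on canonical forms, together with \textit{hasattrs}/\textit{param-match} reasoning from $\Sigma\vdash\mu$). Your elaboration of the propagation case through evaluation contexts and your identification of canonical forms plus the heap-typing invariant as the crux are consistent with the paper's (much terser) sketch.
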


These key lemmas let us prove the overall statement of open world soundness:

\begin{theorem}[Open world soundness]
  If $\Gamma \vdash t \leadsto e : \tsym$, then for any context $\cc$
  such that $\cc:\flr{\Gamma}{;}\flr{\tsym}\Rightarrow \emptyset{;}S$,
  we have that $\emptyset;\emptyset\vdash \cc[e]:S$ and either:
  \begin{itemize}
  \item for some $v,\Sigma,\mu$, $\cc[e]\mid\emptyset\longrightarrow^{*} v\mid\mu$ and $\emptyset;\Sigma\vdash v:S$ and $\Sigma\vdash\mu$, or
  \item $\cc[e]\mid\emptyset\longrightarrow^{*}\mathtt{pyerror}(\bullet)$, or
  \item $\cc[e]\mid\emptyset\longrightarrow^{*}\mathtt{casterror}$, or
  \item for all $r$ such that $\cc[e]\mid\emptyset\longrightarrow^{*}r$, have $r = e'\mid\mu'$ and there exists $r'$ such that $e'\mid\mu'\longrightarrow r'$.
  \end{itemize}
\end{theorem}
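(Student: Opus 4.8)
The plan is to assemble the four preceding lemmas into a standard safety argument, since the genuine difficulty has already been pushed into Preservation and Progress. First I would establish the static half of the conclusion, $\emptyset;\emptyset \vdash \cc[e]:S$. Applying the translation-typing lemma to the hypothesis $\Gamma \vdash t \leadsto e:\tsym$ gives $\flr{\Gamma};\emptyset \vdash e:\flr{\tsym}$. Instantiating the Composition lemma with the given context typing $\cc:\flr{\Gamma};\flr{\tsym}\Rightarrow \emptyset;S$ and this derivation then yields $\emptyset;\emptyset \vdash \cc[e]:S$ directly. This discharges the well-typedness claim and, crucially, sets up the dynamic half: $\cc[e]$ is a closed, well-typed \upython{} expression over the empty heap, with $\emptyset\vdash\emptyset$ holding trivially.

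For the dynamic half I would prove a single strengthened invariant by induction on the multi-step reduction, and then read the four-way disjunction off it. The invariant is: whenever $\emptyset;\Sigma \vdash e':S$, $\Sigma\vdash\mu'$, and $e'\mid\mu'\longrightarrow^{*} r$, the result $r$ is one of (i) a configuration $e''\mid\mu''$ for which there is some $\Sigma''\sqsupseteq\Sigma$ with $\emptyset;\Sigma''\vdash e'':S$ and $\Sigma''\vdash\mu''$, (ii) $\mathtt{casterror}$, or (iii) $\mathtt{pyerror}(\bullet)$. The induction follows the rules defining $\longrightarrow^{*}$: the reflexive case is immediate from the hypothesis; the error cases are handled by Progress, which guarantees that a well-typed expression can only produce $\mathtt{casterror}$ or a $\bullet$-labeled $\mathtt{pyerror}$ --- never $\mathtt{pyerror}(\circ)$ and never a stuck state; and the chaining case applies Preservation to the first step to re-establish the typing hypotheses at an extended store type $\Sigma''\sqsupseteq\Sigma$, after which the induction hypothesis finishes.

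With this invariant applied to $e'=\cc[e]$, $\mu'=\emptyset$, $\Sigma=\emptyset$, I would obtain the disjunction by a dichotomy on whether any terminal result is reachable. If some reduction $\cc[e]\mid\emptyset \longrightarrow^{*} r$ reaches a value configuration $v\mid\mu$, a $\mathtt{casterror}$, or a $\mathtt{pyerror}(p)$, the invariant forces this into disjunct (a) (with witnessing $\Sigma$ giving $\emptyset;\Sigma\vdash v:S$ and $\Sigma\vdash\mu$), (c), or (b) (with $p=\bullet$), respectively. Otherwise no terminal result is reachable, so every reachable $r$ is a non-value configuration $e''\mid\mu''$; the invariant makes each such configuration well-typed, and Progress then supplies a step $e''\mid\mu''\longrightarrow r'$, which is exactly disjunct (d).

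At the level of this theorem the main points needing care are bookkeeping rather than deep. The first is the threading of the store type through iterated preservation: because each step may allocate, the invariant must carry the monotone growth $\Sigma \sqsubseteq \Sigma'$ and invoke stability of typing under store-type extension, rather than a fixed $\Sigma$. The second, which I expect to be the real crux, is the phrasing of the termination/divergence split that yields disjunct (d): one must argue that the \emph{absence} of a reachable value or error leaves only steppable, well-typed configurations, and this is precisely where Progress does the essential work of ruling out both stuck states and $\mathtt{pyerror}(\circ)$. That a $\circ$-labeled (translated) redex is never the source of an uncaught type error is the heart of open-world soundness, and it is inherited here entirely from the Progress lemma.
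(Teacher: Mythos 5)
Your proposal is correct and follows essentially the same route as the paper: translation-preservation plus Composition for the static half, then a dichotomy on divergence resolved by Progress, Iterated Preservation, and the blame corollary (which you fold into a single strengthened invariant rather than stating as separate corollaries, a purely organizational difference). No gaps.
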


This theorem states that no program can ever result in
$\mathtt{pyerror}(\circ)$, which would indicate an uncaught type error
in translated Anthill code. Type soundness in the usual sense is an
immediate corollary of this theorem, by choosing $\cc$ to be the empty
context $\chole$.








We proved this theorem for the Anthill and \upython{} languages using
the Coq proof assistant; the completed proof files are available at \hosturl. A sketch of the
proof can also be examined in Appendix \ref{apx:proofs}. The proof
combines a typical progress and preservation type soundness proof (for
\upython, and so including many additional cases for errored terms)
with proofs that the translation relation is type preserving and that
composition of terms and contexts is well-typed.

\subsection{Ramifications of open world soundness}

Because Anthill admits open world soundness, a program written in
Anthill can be used by native \upython{} clients. For example, an
Anthill library can put type annotations on its API boundaries, and
these types will be checked, preventing difficult-to-diagnose errors
from arising deep within the library --- even if the library is used
by code which has no concept of static types. Furthermore, the Anthill
code is protected from errors arising due to mutation. While foreign
functions are not modeled directly in the Anthill and \upython{}
calculi, note that the distinction between untyped Python programs and
compiled C code is relevant in \guarded{} because of the presence of
proxies. Since Anthill and \upython{} lack proxies, however, this
distinction is irrelevant, and such foreign functions can be modeled
as untranslated \upython{} code, and thus open world soundness shows
that calls to such code within Anthill programs will not interfere
with Anthill's type soundness.

We conjecture that Reticulated Python is also open world sound when
using the \transient{} semantics; it is certainly closer to open world
soundness than the \guarded{} semantics. Evidence for this exists in
the work of \citet{vitousek:2014retic}. They found that Reticulated
Python ran all of their case studies without error under \transient{},
while it could not successfully execute one of them when using
\guarded{}. Further, more modifications to the typed programs were
necessary when using \guarded{} to avoid errors caused by proxy identity problems.

Combined with its ease of implementation, the fact that \transient{}
admits open world soundness means that it is a realistic and useful
technique in designing gradually typed languages like Reticulated
Python. It does require performance overhead in its pervasive checks,
and we envision practical implementations offering a ``switch'' to
allow developers to debug their program with thorough type checking
and disable it for production (similar to, but more complete than,
Dart's checked mode). Open world soundness demonstrates that it is
useful in circumstances that challenge \guarded{}, and its ease of
implementation makes it more practical than other approaches.

\section{Conclusions}\label{sec:conclusions}

The traditional \guarded{} approach for the runtime semantics of
gradually typed languages, based on proxies, is well-understood and
powerful, but it is unsound in an open world in when applied to
languages like Python.  The \transient{} design for gradual typing
provides an alternative approach which is open world sound. 

In this paper we develop a formal treatment of \transient{}
with calculi that model Reticulated Python and Python. Furthermore, we
discuss a formal property that is relevant to source-to-source
implementations of gradually typed languages, open world soundness,
which states that typed code can be embedded in untranslated code
without causing uncaught type errors.  We prove that our calculi admit
the open world soundness principle and mechanize the proof in Coq.

Many industrial gradually typed languages avoid runtime checking
altogether. This is especially relevant when the system is
designed to translate to an underlying language whose semantics cannot
be modified by the designer of the surface language.  Reticulated
Python and the \transient{} semantics demonstrate that gradual typing
can be implemented straightforwardly, without
modifying the target language's semantics, and while allowing
interoperation and preserving soundness.

%
%
{
\scriptsize
\bibliographystyle{plainnat}
\bibliography{mike,all,library}
}

\clearpage
\newpage
\renewcommand\rightmark{}
\renewcommand\leftmark{}
\appendix
\section{Appendix: Full semantics} \label{apx:full-semantics}

Figure \ref{fig:apx:full-translation} shows the translation from Anthill
Python to \upython, and Figure \ref{fig:apx:upython-blame-typing} shows
the type system used to prove soundness for \upython.
Figures \ref{fig:apx:lookup-owned} and \ref{fig:apx:eval} show the runtime
semantics of \upython, and are identical to Figures \ref{fig:lookup}
and \ref{fig:eval} in the main body of the paper except for the
addition of origin labels $p$. Figure \ref{fig:apx:context-typing}
shows the typing judgments for contexts $\cc$. Figure
\ref{fig:apx:more-metafunctions} shows assorted metafunctions not
defined in the main body of this work.

\begin{figure*}
\boxed{\cig\vdash t\leadsto e:\tsym}
\begin{mathpar}
  \ninference{IVar}{\Gamma(x)=\tsym}{\cig\vdash x\leadsto x:\tsym}\and
\ninference{IInt}{ }{\cig\vdash n \leadsto n:\mathtt{int}}\and
\ninference{ILet}{\cig\vdash t_1\leadsto e_1:\tsym_1 \\ \Gamma,x{:}\tsym_1\vdash t_2\leadsto e_2:\tsym_2}
          {\cig\vdash \lett{x}{t_1}{t_2}\leadsto \lett{x}{e_1}{e_2}:\tsym_2} 
\and
\ninference{IGet}{\cig\vdash t\leadsto e:\tsym_1 \\ \Delta=\textit{mems}(\tsym_1) \\\\ \Delta(x)=\tsym_2}
          {\cig\vdash t.x \leadsto \tcheck{e.x^\circ}{\flr{\tsym_2}}:\tsym_2}
\and
\ninference{IGet-Check}{\cig\vdash t\leadsto e: \tsym_1 \\ \Delta=\textit{mems}(\tsym_1)\\\\ x\not\in\textit{dom}(\tsym_1) \\ \textit{queryable}(\tsym_1)=\lozenge}
          {\cig\vdash t.x \leadsto (\tcheck{e}{\bobjty{x}}).x^\circ:\dyn}
\and
\ninference{ISet}{\cig\vdash t_1\leadsto e_1:\tsym_1 \\ \cig\vdash t_2 \leadsto e_2:\tsym_2' \\\\ \Delta=\textit{mems}(\tsym_1)}
          {\cig\vdash t_1.x=t_2\leadsto e_1.x^\circ=e_2:\mathsf{int}}
\and
\ninference{ISet-Check}{\cig\vdash t_1 \leadsto e_1:\tsym_1 \\ \cig\vdash t_2\leadsto e_2:\tsym_2 \\\\ \Delta=\textit{mems}(\tsym_1) \\ x\not\in\textit{dom}(\tsym_1) \\\\ \textit{queryable}(\tsym_1)=\lozenge}
          {\cig\vdash t_1.x=t_2\leadsto (\tcheck{e_1}{\bobjty{\emptyset}}).x^\circ=e_2:\mathsf{int}}
\and
\fninference{IFun}{\Gamma,\overline{x:\tsym_1} \vdash t\leadsto e:\tsym_2' \\ \tsym_2' \lesssim \tsym_2}
            {\Gamma {\vdash} \funt{\overline{x{:}\tsym_1}}{\tsym_2}{t} {\leadsto} \fune{\overline{x}}{\overline{\mathtt{let}~x=\tcheck{x}{\flr{\tsym_1}}~\mathtt{in}~}e}{:}\overline{\tsym_1}{\to} \tsym_2}
\and
\ninference{IApp-Dyn}{\cig\vdash t_1\leadsto e_1:\dyn \\ \cig\vdash \overline{t_2\leadsto e_2:\tsym}}
          {\cig\vdash t_1(\overline{t_2}) \leadsto (\tcheck{e_1}{\bto{|\overline{\tsym}|}})(\overline{e_2})^\circ : \dyn}
\and
\ninference{IApp-Fun}{\cig\vdash t_1\leadsto e_1:\overline{\tsym_1} \to \tsym_2 \\ 
           \Gamma\vdash \overline{t_2\leadsto e_2:\tsym_1'} \\\\ |\overline{\tsym_1}| = |\overline{t_2}| \\ 
           \overline{\tsym_1' \lesssim \tsym_1}}
          {\cig\vdash t_1(\overline{t_2})\leadsto \tcheck{(e_1(\overline{e_2})^\circ)}{\flr{\tsym_2}} : \tsym_2}
\and
\ninference{IApp-Constr}{\cig\vdash t_1\leadsto e_1:\classty{X}{q}{\Delta_1}{\Delta_2}{\overline{\tsym_1}} \\
           \Gamma\vdash \overline{t_2\leadsto e_2:\tsym_1'} \\\\ |\overline{\tsym_1}| = |\overline{t_2}| \\
           \overline{\tsym_1' \lesssim \tsym_1} \\ \tsym_2 = \objty{X}{q}{\textit{instantiate}(\Delta_1,\Delta_2)}}
          {\cig \vdash t_1(\overline{t_2}) \leadsto \tcheck{(e_1(\overline{e_2})^\circ)}{\flr{\tsym_2}}:\tsym_2}
\and
  \fninference{IClass}{\cig\vdash \overline{t_s\leadsto e_s:\tsym_s} \\
           \tsym_{class}=\classty{X}{q}{\Delta_1}{\Delta_2}{\overline{\tsym_c}}\\
\cig\vdash_\sigma c \leadsto e_c : \overline{\tsym_c} \\ \Gamma; \tsym_{class}\vdash_\varsigma \overline{m\leadsto e_m:\tsym_m} \\ \cig\vdash \overline{t_f\leadsto e_f:\tsym_f} \\
           \overline{e_s'=\tcheck{e_s}{\bclassty{\flr{\textit{mems}(\tsym_s)}}{\mathsf{Any}}}} \\ 
           \forall x \in \mathit{dom}(\Delta_1),~ (\overline{\ell_f}\times \overline{A_f} \cup 
           \overline{\ell_m}\times \overline{A_m} \cup \overline{\mathit{mems}(A_s)})(x) \lesssim \Delta_1(x)}
          {
              \cig\vdash \classt{X}{q}{\Delta_1}{\Delta_2}{\overline{t_s}}{\overline{\ell_f=^M m}}{\overline{\ell_m=^F t_m}}{c} 
              \leadsto \\\\ \classe{X}{\overline{e_s'}}{\overline{\ell_m=e_m},\overline{\ell_f=e_f}}{e_c}{}:\tsym_{class}
            }
\end{mathpar}
\boxed{\cig\vdash_\sigma c\leadsto e:\overline{\tsym}}\vspace{-3ex}
\begin{mathpar}
\fninference{IConstruct}{\Gamma,x_s{:}\dyn,\overline{x{:}\tsym_1}\vdash t\leadsto e:\tsym_2}
          {\cig\vdash_\sigma \constructt{\overline{x{:}\tsym_1}}{t} \leadsto \fune{x_s,\overline{x}}{\overline{\mathtt{let}~x=\tcheck{x}{\flr{\tsym_1}}~\mathtt{in}~}e}:\overline{\tsym_1}}
\end{mathpar}
\boxed{\Gamma;\tsym\vdash_\varsigma m\leadsto e:\tsym}\vspace{-1ex}
\begin{mathpar}
\ninference{IMethod}{\tsym_o = \objty{X}{q}{\textit{instantiate}(\Delta_1,\Delta_2)} \\ \tsym_2' \lesssim \tsym_2\\ \Gamma,x_s{:}\tsym_o,\overline{x{:}\tsym_1}\vdash t\leadsto e:\tsym_2'}
          {
              \Gamma; \classty{X}{q_1}{\Delta_1}{\Delta_2}{\overline{\tsym_c}}\vdash_\varsigma \method{\overline{x{:}\tsym_1}}{\tsym_2}{t} \leadsto \\\\
              \fune{x_s,\overline{x}}{\lett{x_s}{\tcheck{x_s}{\flr{\tsym_o}}}{\overline{\mathtt{let}~x=\tcheck{x}{\flr{\tsym_1}}~\mathtt{in}~}e}:\overline{\tsym_1}\to \tsym_2}
            }
\end{mathpar}
  \caption{Translation from Anthill Python to \upython{} (including origin)}
  \label{fig:apx:full-translation}
\end{figure*}

\begin{figure*}
\boxed{\Gamma;\Sigma\vdash e:S}
\begin{mathpar}
\ninference{TSubsump}{\tlg\vdash e:S_2 \\ S_2 <: S_1}{\tlg\vdash e:S_1}\and
\ninference{TVar}{\Gamma(x)=S}{\tlg\vdash x{:}S}\and
\ninference{TAddr}{\Sigma(a)=S}{\tlg\vdash a{:}S}\and
\ninference{TInt}{ }{\tlg\vdash n{:}\mathsf{int}}\and
\ninference{TApp-Dyn}{\tlg\vdash e_1:\bdyn \\ \tlg\vdash \overline{e_2:\bdyn}}{\tlg\vdash e_1(\overline{e_2})^\bullet : \bdyn}\and
\ninference{TApp}{\tlg\vdash e_1:\bto{n} \\ \tlg\vdash \overline{e_2:\bdyn} \\ |\overline{e_2}| =n}{\tlg\vdash e_1(\overline{e_2})^\circ : \bdyn}
\and
\ninference{TGet-Dyn}{\tlg\vdash e:\bdyn}
          {\tlg\vdash e.x^\bullet:\bdyn}
          \and
\ninference{TGet}{\tlg\vdash e:\bobjty{x}}
          {\tlg\vdash e.x^\circ:\bdyn}
\and
\ninference{TSet-Dyn}{\tlg\vdash e_1:\bdyn \\ \tlg\vdash e_2:\bdyn}
          {\tlg\vdash e_1.x^\bullet=e_2:\mathsf{int}}
\and
\ninference{TSet}{\tlg\vdash e_1:\bobjty{\emptyset} \\ \tlg\vdash e_2:\bdyn}
          {\tlg\vdash e_1.x^\circ=e_2:\mathsf{int}}
\and
\ninference{TClass-Dyn}{\tlg\vdash \overline{e_s:\bdyn} \\ \tlg\vdash \overline{e_m:\bdyn} \\ \tlg\vdash e_c:\bdyn}
          {\tlg\vdash \classe{X}{\overline{e_s}}{\overline{x=e_m}}{e_c}{\bullet}:\bclassty{\overline{x}}{\mathsf{Any}}}
\and
\ninference{TClass}{\tlg\vdash \overline{e_s:\bclassty{\delta}{\mathsf{Any}}} \\ \delta' = \bigcup\overline{\delta} \\ \tlg\vdash \overline{e_m:\bdyn} \\ \tlg\vdash e_c:\bto{(n{+}1)}}
          {\tlg\vdash \classe{X}{\overline{e_s}}{\overline{x=e_m}}{e_c}{\circ}:\bclassty{\overline{x} \cup \delta'}{n}}
\and
\ninference{TFun}{\Gamma,\overline{x{:}\bdyn};\Sigma \vdash e:\bdyn}{\tlg \vdash \fune{\overline{x}}{e} :\bto{|\overline{x}|}}
\and
\ninference{TCheck}{\tlg\vdash e:\bdyn}{\tlg\vdash \tcheck{e}{S} : S}\and
\ninference{TLet}{\tlg\vdash e_1:S_1 \\ \Gamma,x{:}S_1{;}\Sigma\vdash e_2:S_2}{\tlg\vdash \lett{x}{e_1}{e_2} : S_2}
\end{mathpar}
\boxed{\Sigma;\mu\vdash a:S}
\begin{mathpar}
\ninference{THClass}{\mu(a)=\classh{\overline{a'}}{\overline{x_f=v_f}}{v} \\ \mathit{hasattrs}(a,\delta,\mu) \\ \textit{param-match}(a,\mu,C) \\\overline{\Sigma(a')=\bclassty{\delta'}{C'}} \\ \overline{\emptyset{;}\Sigma\vdash v_f:\bdyn}}
                     {\Sigma;\mu\vdash a:\bclassty{\delta}{C}}\and
\ninference{THObject}{\mu(a)=\objh{a'}{\overline{x_f=v_f}} \\ \mathit{hasattrs}(a,\delta,\mu)\\ \Sigma(a')=\bclassty{\delta'}{C'} \\ \overline{\emptyset{;}\Sigma\vdash v_f:\bdyn}}
                     {\Sigma;\mu\vdash a:\bobjty{\delta}}
\end{mathpar}
\boxed{\Sigma\vdash \mu}
\begin{mathpar}
\inference{\textit{dom}(\Sigma)=\textit{dom}(\mu) \\ \forall a \in \textit{dom}(\Sigma).~\Sigma;\mu\vdash a:\Sigma(a)}{\Sigma\vdash \mu}
\end{mathpar}
  \caption{Type system for \upython{} (including origin)}
  \label{fig:apx:upython-blame-typing}
\end{figure*}

\newcommand{\loca}[3]{\left\{\begin{array}{p{4.2cm}ll}\ensuremath{#1} & #2 & #3\end{array}\right.}

\begin{figure}
\boxed{\mathit{lookup}(a,h,\ell,\mu,p)=r}
  \begin{mathpar}
    \inferrule{M(\ell)=v}{\mathit{lookup}(a,\objh{a'}{M},\ell,\mu,p)=v\mid\mu}\and
    \inferrule{\ell\not\in\mathit{dom}(M) \\\\ \mathit{getattr}(a,\ell,\mu)=v \\ v\neq\lambda\overline{\ell}.e}{\mathit{lookup}(a,\objh{a'}{M},\ell,\mu,p)=v\mid\mu}\and
    \inferrule{\ell\not\in\mathit{dom}(M)\\\\\mathit{getattr}(a,\ell,\mu)=v \\ v=\lambda\overline{x}.e \\ |\overline{y}|=|\overline{x}|-1}{\mathit{lookup}(a,\objh{a'}{M},\ell,\mu,p)=\lambda\overline{y}.v(a,\overline{y})^p\mid\mu}\and
    \inferrule{\ell\not\in\mathit{dom}(M) \\ \mathit{getattr}(a,\ell,\mu)=v \\ v=\lambda\overline{x}.e \\ |\overline{x}| =0}{\mathit{lookup}(a,\objh{a'}{M},\ell,\mu,p)=\mathtt{casterror}}\and
    \inferrule{\textit{getattr}(a,\ell,\mu)=v}{\mathit{lookup}(a,\classh{\overline{a'}}{M}{v'},\ell,\mu,p)=v\mid\mu}
  \end{mathpar}
\caption{Relations used in \upython{} evaluation, with origin labels}
\label{fig:apx:lookup-owned}
\end{figure}

\begin{figure}
\boxed{e\mid\mu\longrightarrow r}
  \begin{mathpar}
    \ninference{EStep}{e\mid\mu\longrightarrow e'\mid\mu'}
              {E[e]\mid\mu \longmapsto E[e']\mid\mu'}
\and
    \ninference{EPyError}{e\mid\mu\longrightarrow \mathtt{pyerror}(p)}
              {E[e]\mid\mu \longmapsto \mathtt{pyerror}(p)}
\and
    \ninference{ECastError}{e\mid\mu\longrightarrow \mathtt{casterror}}
              {E[e]\mid\mu \longmapsto \mathtt{casterror}}
  \end{mathpar}
\newcommand{\steplab}[1]{\scriptsize\textsc{(#1)}}
\[
\begin{array}{lcl}
   \steplab{ECheck1,2}\hfill\tcheck{v}{S}\mid\mu \hfill& \longrightarrow & \left\{
      \begin{array}{p{2.2cm}l}
        $v \mid \mu$ & \text{if }\mathit{check}(v,\mu,S)\\[.5ex]
        $\mathtt{casterror}$ & \text{otherwise}
      \end{array}\right.\\
 \steplab{EApp1,2,3}\hfill v_1(\overline{v_2})^p\mid\mu & \longrightarrow & \left\{
    \begin{array}{p{2.2cm}l}
      $e[\overline{x/v_2}] \mid \mu$ & \text{if }v_1=\lambda \overline{x}.e\\
      &\text{and } |\overline{x}| =|\overline{v_2}|\\[.5ex]
      $\mathtt{let}~\_=$ & \text{if } v_1=a \\
      \and$v'(a',\overline{v_2})^p$ & \text{and } \mu(a)=\classh{\overline{a''}}{M}{v'}\\
      \and$\mathtt{in}~a' \mid \mu'$& \text{and } a' \text{ fresh}\\
      & \text{and } \mu'=\mu[a'\mapsto \objh{a}{\emptyset}]\\[.5ex]
      $\mathtt{pyerror}(p)$ & \text{otherwise}
    \end{array}\right.\\[.5ex]
  \steplab{ELet}\hfill\lett{x}{v}{e} & \longrightarrow & \and\;e[x/v]\\[.5ex]
  \overset{\steplab{EClass1,2}\hfill}{\classe{X}{\overline{a}}{M}{v}{p} \mid \mu} & \longrightarrow &  \left\{
    \begin{array}{p{2.2cm}l}
      $a'\mid\mu[a'\mapsto h]$ & \text{if } \overline{\mu(a)=\classh{\overline{a''}}{M'}{v'}} \\
      & \text{and } \textit{param-match}(v,\mu,\textsf{Any})\\
      & \text{and } h=\classh{\overline{a}}{M}{v}\\
      & \text{and } a' \text{ fresh}\\[.5ex]
      $\mathtt{pyerror}(p)$ & \text{otherwise}
    \end{array}\right.\\
  \overset{\steplab{EClass3}\hfill}{\classe{X}{\overline{v_1}}{M}{v_2}{p} \mid \mu} & \longrightarrow &
  \begin{array}{ll}
    \;\;\;\mathtt{pyerror}(p) & \and\;\text{if } \overline{v_1 \neq a}
  \end{array}\\[.5ex]
  \steplab{EGet1,2}\hfill a.\ell^p\mid\mu & \longrightarrow & \left\{
    \begin{array}{p{2.2cm}l}
      $r$ & \text{if } \mathit{lookup}(a,\mu(a),\ell,\mu)=r\\
      $\mathtt{pyerror}(p)$ & \text{otherwise}
    \end{array}\right.\\
  \steplab{EGet3}\hfill v.\ell^p\mid\mu & \longrightarrow &
  \begin{array}{ll}
    \;\;\;\mathtt{pyerror}(p) & \and\;\text{if } v\neq a
  \end{array}\\
   \steplab{ESet1,2,3}\hfill a.\ell^p=v\mid\mu & \longrightarrow & \left\{
    \begin{array}{p{2.2cm}l}
      $0\mid\mu[a\mapsto h']$ & \text{if } \mu(a)=\objh{a'}{M} \\
      & \text{and } h'=\objh{a'}{M[\ell=v]}\\[.5ex]
      $0 \mid \mu[a\mapsto h']$ & \text{if } \mu(a)=\classh{\overline{a'}}{M}{v'}\\
      & \text{and } h'=\classh{\overline{a'}}{M[\ell=v]}{v'}\\[.5ex]
      $\mathtt{pyerror}(p)$ & \text{otherwise}
    \end{array}\right.\\
  \steplab{ESet4}\hfill v_1.\ell^p=v_2\mid\mu & \longrightarrow &
  \begin{array}{ll}
    \;\;\;\mathtt{pyerror}(p) & \and\;\text{if } v_1\neq a
  \end{array}\\
\end{array}
\]
\boxed{e\mid\mu\longrightarrow^{*}r}
\begin{mathpar}
  \fninference{MRefl}{ }{e\mid\mu\longrightarrow^{*}e\mid\mu}\and
  \fninference{MPyErr}{e\mid\mu\longmapsto\mathtt{pyerror}(p)}{e\mid\mu\longrightarrow^{*}\mathtt{pyerror}(p)}\and
  \fninference{MCastErr}{e\mid\mu\longmapsto\mathtt{casterror}}{e\mid\mu\longrightarrow^{*}\mathtt{casterror}}\and
  \fninference{MChain}{e\mid\mu\longmapsto e'\mid\mu' \\ e'\mid\mu'\longrightarrow^{*}r}{e\mid\mu\longrightarrow^{*}r}
\end{mathpar}
  \caption{\upython{} evaluation rules, with origin labels}
  \label{fig:apx:eval}
\end{figure}

\newcommand{\elg}{\Gamma'{;}\emptyset}
\newcommand{\erinference}[3]{\inference{#2}{#3}}

\begin{figure*}
\boxed{\cc:\Gamma{;}S\Rightarrow\Gamma{;}S}
  \begin{mathpar}
    \inference{ }{\chole:\Gamma{;}S\Rightarrow \Gamma{;}S}\and
\erinference{TSubsump}{\cc:\Gamma{;}S_1\Rightarrow \Gamma'{;}S_3 \\ S_3 <: S_2}{\cc:\Gamma{;}S_1\Rightarrow \Gamma'{;}S_2}\and
\erinference{TApp-Dyn}{\cc:\Gamma{;}S\Rightarrow\Gamma'{;}\bdyn \\ \Gamma'{;}\emptyset\vdash\overline{e:\bdyn}}{\cc(\overline{e})^\bullet : \Gamma{;}S\Rightarrow\Gamma'{;}\bdyn}\\
\erinference{TApp-Dyn}{\Gamma'{;}\emptyset\vdash e:\bdyn \\ \Gamma'{;}\emptyset\vdash \overline{e:\bdyn} \\ \cc:\Gamma{;}S\Rightarrow\Gamma'{;}\bdyn}{e(\overline{e},\cc,\overline{e})^\bullet : \Gamma{;}S\Rightarrow\Gamma'{;}\bdyn}
\and
\inference{\cc:\Gamma{;}S\Rightarrow\Gamma'{;}\bdyn}
          {\cc.x^\bullet:\Gamma{;}S\Rightarrow\Gamma'{;}\bdyn}
\and
\inference{\cc:\Gamma{;}S\Rightarrow\Gamma'; \bdyn \\ \elg\vdash e:\bdyn}
          {\cc.x^\bullet=e:\Gamma{;}S\Rightarrow \Gamma';\mathsf{int}}\and
\inference{\elg\vdash e:\bdyn \\ e:\Gamma{;}S\Rightarrow\Gamma'{;}\bdyn}
          {\tlg\vdash e.x^\bullet=\cc:\Gamma{;}S\Rightarrow\Gamma';\mathsf{int}}
\and
\inference{\elg\vdash \overline{e_s:\bdyn} \\ \elg\vdash \overline{e_m:\bdyn} \\ \elg\vdash e_c:\bdyn \\ \cc:\Gamma{;}S\Rightarrow\Gamma'{;}\bdyn}
          {\classe{X}{\overline{e_s},\cc,\overline{e_s}}{\overline{x=e_m}}{e_c}{\bullet}:\Gamma{;}S\Rightarrow\Gamma'{;}\bdyn}
\and
\inference{\elg\vdash \overline{e_s:\bdyn} \\ \elg\vdash \overline{e_m:\bdyn} \\ \elg\vdash e_c:\bdyn \\ \cc:\Gamma{;}S\Rightarrow\Gamma'{;}\bdyn}
          {\classe{X}{\overline{e_s}}{\overline{x=e_m},x=\cc,\overline{x=e_m}}{e_c}{\bullet}:\Gamma{;}S\Rightarrow\Gamma'{;}\bdyn}
\and
\inference{\elg\vdash \overline{e_s:\bdyn} \\ \elg\vdash \overline{e_m:\bdyn} \\ \cc:\Gamma{;}S\Rightarrow\Gamma'{;}\bdyn}
          {\classe{X}{\overline{e_s}}{\overline{x=e_m}}{\cc}{\bullet}:\Gamma{;}S\Rightarrow\Gamma'{;}\bdyn}
\and
\inference{\cc:\Gamma{;}S\Rightarrow \Gamma',\overline{x{:}\bdyn};\bdyn}{\fune{\overline{x}}{\cc}:\Gamma{;}S\Rightarrow\Gamma';\bto{|\overline{x}|}}
\and
\erinference{TCheck}{\cc:\Gamma{;}S\Rightarrow\Gamma'{;}\bdyn}{\tcheck{\cc}{S'} : \Gamma{;}S\Rightarrow\Gamma'{;}S'}\\
\erinference{TLet}{\cc:\Gamma{;}S_1\Rightarrow\Gamma'{;}S_2 \\ \Gamma',x{:}S_2{;}\emptyset\vdash e:S_3}{\lett{x}{\cc}{e} : \Gamma{;}S_1\Rightarrow \Gamma'{;}S_3}\and
\erinference{TLet}{\Gamma'{;}\emptyset\vdash e:S_2 \\ \cc:\Gamma{;}S_1\Rightarrow\Gamma',x{:}S_2; S_3}{\lett{x}{e}{\cc} : \Gamma{;}S_1\Rightarrow \Gamma'{;}S_3}
  \end{mathpar}
  \caption{Type system for \upython{} contexts}
  \label{fig:apx:context-typing}
\end{figure*}

\section{Appendix: Open World Soundness}\label{apx:proofs}

The full mechanized proof of open-world soundness for Anthill and
\upython{} is presented in the Coq files \texttt{anthill1.v},
\texttt{anthill2.v}, \texttt{anthill3.v}, and \texttt{anthill4.v}. The
semantics of the mechanized proof differ from those presented here in
that the reduction is single-step without evaluation contexts, and the
receiver is the last parameter of constructors and methods, rather
than the first.  The Coq proof also does not use a heap type $\Sigma$,
but instead uses the heap directly in the \upython{} typechecking
relation and the theorems (but ignoring the values of object fields).

This section contains proof sketches of the key lemmas and most
important technical lemmas, as well as the proof of open-world
soundness.

\setcounter{theorem}{0}
\begin{techlemma}[Environment weakening]\label{lem:apx:weakening}
If $\Gamma{;}\Sigma\vdash e:S$ and $\Gamma \subseteq  \Gamma'$ then $\Gamma';\Sigma\vdash e:S$.
\end{techlemma}
\begin{sketch}
  Straightforward induction on $\Gamma{;}\Sigma\vdash e:S$.
\end{sketch}

\begin{mainlemma}[Anthill translation preserves typing]\label{lem:apx:trans-type-preserve}\hfill
  \begin{itemize}
  \item If $\Gamma\vdash t\leadsto e:\tsym$, then $\flr{\Gamma};\emptyset\vdash
  e:\lfloor \tsym\rfloor$.
\item If $\Gamma\vdash c\leadsto e: \overline{\tsym}$, then
  $\flr{\Gamma};\emptyset\vdash e:\bto{|\overline{\tsym}|}$.
\item If $\Gamma{;} \tsym_1 \vdash d \leadsto e:\tsym_2$, then
  $\flr{\Gamma};\emptyset\vdash e:\flr{\tsym_2}$
  \end{itemize}
\end{mainlemma}
\begin{sketch}
  By induction on $\Gamma\vdash t\leadsto e:\tsym$, $\Gamma\vdash
  c\leadsto e:\overline{\tsym}$, and $\Gamma{;} \tsym_1\vdash d\leadsto
  e:\tsym_2$. In the \textsc{IFun}, \textsc{IMethod}, and
  \textsc{IConstruct} cases, use Technical Lemma \ref{lem:apx:weakening}.
\end{sketch}

\begin{techlemma}\label{lem:apx:lemmaf}
  If $\Sigma\vdash\mu$ and $\emptyset;\Sigma\vdash a:S$, then $a\in\textit{dom}(\Sigma)$.
\end{techlemma}
\begin{sketch}
  By induction on $\emptyset;\Sigma\vdash a:S$.
\end{sketch}

\begin{techlemma}[Canonical forms]\label{lem:apx:canonical}
  If $\emptyset{;}\Sigma\vdash v : S$ and $\Sigma\vdash\mu$, then:
  \begin{itemize}
  \item If $S=\mathsf{int}$, then $v=n$.
  \item If $S=\bto{n}$, then either 
    \begin{itemize}
    \item $v=\lambda \overline{x}.e$ and $|\overline{x}| =n$, or
    \item $v=a$ and $\mu(a)=\classh{\overline{a'}}{M}{v'}$ and
      $\textit{param-match}(v', \mu, n)$ 
    \end{itemize}
  \item If $S=\bclassty{\delta}{C}$, then $v=a$ and \\$\mu(a)=\classh{\overline{a'}}{M}{v'}$ and
      \\$\textit{param-match}(v', \mu, C)$ and $\textit{hasattrs}(a,\delta,\mu)$.
  \item If $S=\bobjty{\delta}$ then $v=a$ and either
    \begin{itemize}
    \item $\mu(a)=\classh{\overline{a'}}{M}{v'}$ and
      \\$\textit{hasattrs}(a,\delta,\mu)$, or
    \item $\mu(a)=\objh{a}{M}$ and
      $\textit{hasattrs}(a,\delta,\mu)$.
    \end{itemize}
  \item If $S=\bdyn$ then for some $S'\neq\bdyn$, $S' <: S$,
    $\emptyset{;}\Sigma\vdash v : S'$
  \end{itemize}
\end{techlemma}
\begin{sketch}
  By induction on $\emptyset;\Sigma\vdash v : S$. In case
  \textsc{TSubsump}, proceeding by cases on $S$ and frequently using
  Technical Lemma \ref{lem:apx:lemmaf}.
\end{sketch}

\begin{techlemma}[Heap weakening]\label{lem:apx:heap-weakening}
If $\Gamma{;}\Sigma\vdash e:S$ and $\Sigma \sqsubseteq  \Sigma'$ then $\Gamma{;}\Sigma'\vdash e:S$.
\end{techlemma}
\begin{sketch}
  By induction on $\Gamma{;}\Sigma \vdash e:S$,
  subsuming class types to object types when necessary.
\end{sketch}

\begin{techlemma}\label{lem:apx:lemmai}
  If $\Sigma\vdash\mu$ and $\Sigma;\mu\vdash a:S$ and $\textit{hasattrs}(a, \delta', \mu)$, then
  \begin{itemize}
  \item If $S=\bclassty{\delta}{C}$, then $\delta' \subseteq \delta$.
  \item If $S=\bobjty{\delta}$, then $\delta' \subseteq \delta$.
  \end{itemize}
\end{techlemma}
\begin{sketch}
  By induction on $\Sigma;\mu\vdash a:S$.
\end{sketch}

\begin{techlemma}\label{lem:apx:parammatch}
  If $\Sigma\vdash\mu$ and $\emptyset;\Sigma\vdash v:S$ and
  $\textit{param-match}(v,\mu,n)$, then
  $\emptyset;\Sigma\vdash v:\bto{n}$
\end{techlemma}
\begin{sketch}
  By cases on $v$.
\end{sketch}

\begin{techlemma}[Substitution]\label{lem:apx:substitution}
  If $\Gamma,x:S_1{;}\Sigma\vdash e:S_2$ and
  $\emptyset{;}\Sigma\vdash v:S_1$ then $\Gamma{;}\Sigma\vdash e[x/v]:S_2$.
\end{techlemma}
\begin{sketch}
  Straightforward induction on $\Gamma,x:S_1{;}\Sigma\vdash e:S_2$.
\end{sketch}

\begin{mainlemma}[Preservation]\label{lem:apx:preservation}
  If $\emptyset{;}\Sigma \vdash e:S$, $\Sigma\vdash\mu$, and
  $e\mid\mu \longrightarrow e'\mid\mu'$, then
  $\emptyset{;}\Sigma'\vdash e':S$ and $\Sigma'\vdash\mu'$ and
  $\Sigma \sqsubseteq \Sigma'$.
\end{mainlemma}
\begin{sketch} 
  By induction on $e\mid\mu\longrightarrow e'\mid\mu'$. Uses Technical
  Lemmas \ref{lem:apx:lemmaf}, \ref{lem:apx:canonical},
  \ref{lem:apx:heap-weakening}, \ref{lem:apx:lemmai},
  \ref{lem:apx:parammatch}, and \ref{lem:apx:substitution}.
\end{sketch}

\begin{corollary}[Iterated preservation]\label{lem:apx:iterpres}
  If $\emptyset{;}\Sigma \vdash e:S$, $\Sigma\vdash\mu$, and
  $e\mid\mu \longrightarrow^{*} e'\mid\mu'$, then
  $\emptyset{;}\Sigma'\vdash e':S$ and $\Sigma'\vdash\mu'$ and
  $\Sigma \sqsubseteq \Sigma'$.
\end{corollary}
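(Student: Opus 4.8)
The plan is to proceed by a straightforward induction on the derivation of the multi-step reduction $e\mid\mu \longrightarrow^{*} e'\mid\mu'$, invoking the single-step Preservation result (Main Lemma~\ref{lem:apx:preservation}) once per step and threading the heap type through with transitivity of $\sqsubseteq$. Because the assumed outcome is a configuration $e'\mid\mu'$ rather than an error, only the rules \textsc{MRefl} and \textsc{MChain} can conclude the derivation; the rules \textsc{MPyErr} and \textsc{MCastErr} produce $\mathtt{pyerror}(p)$ or $\mathtt{casterror}$ and are therefore vacuously excluded.

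In the base case \textsc{MRefl}, we have $e'=e$ and $\mu'=\mu$, so I would take $\Sigma'=\Sigma$. The three conclusions $\emptyset;\Sigma\vdash e:S$, $\Sigma\vdash\mu$, and $\Sigma\sqsubseteq\Sigma$ then follow immediately from the hypotheses together with reflexivity of $\sqsubseteq$.

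In the inductive case \textsc{MChain}, the derivation splits the reduction into a single step $e\mid\mu \longmapsto e_1\mid\mu_1$ followed by $e_1\mid\mu_1 \longrightarrow^{*} e'\mid\mu'$. First I would apply Main Lemma~\ref{lem:apx:preservation} to the single step, obtaining a heap type $\Sigma_1$ with $\emptyset;\Sigma_1\vdash e_1:S$, $\Sigma_1\vdash\mu_1$, and $\Sigma\sqsubseteq\Sigma_1$. These are exactly the hypotheses required to apply the induction hypothesis to the remaining reduction, which yields $\Sigma'$ with $\emptyset;\Sigma'\vdash e':S$, $\Sigma'\vdash\mu'$, and $\Sigma_1\sqsubseteq\Sigma'$. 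The remaining obligation $\Sigma\sqsubseteq\Sigma'$ then follows by transitivity of $\sqsubseteq$, while the other two conclusions are already in hand. Note that the type $S$ is carried unchanged through both the single-step lemma and the induction hypothesis, so no appeal to subsumption is needed.

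There is no deep obstacle here: essentially all of the content resides in the single-step Preservation lemma, and the corollary is a routine closure argument. The only points requiring care are that $\sqsubseteq$ is a preorder --- reflexivity is used in the base case and transitivity in the step case --- and that the single-step relation chained by \textsc{MChain} is the same relation for which Main Lemma~\ref{lem:apx:preservation} is stated, so that the lemma applies directly at each step.
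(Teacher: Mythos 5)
Your proposal is correct and matches the paper's own argument, which is a straightforward induction on the multi-step reduction applying Main Lemma~\ref{lem:apx:preservation} in the \textsc{MChain} case; you have simply spelled out the base case, the use of reflexivity and transitivity of $\sqsubseteq$, and the exclusion of the error rules in more detail.
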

\begin{sketch}
  Straightforward induction on $e\mid\mu \longrightarrow
  e'\mid\mu'$. In case \textsc{MChain}, apply Main Lemma \ref{lem:apx:preservation}. 
\end{sketch}

\begin{mainlemma}[Progress]\label{lem:apx:progress}
  If $\emptyset{;}\Sigma \vdash e:S$ and $\Sigma\vdash\mu$, then either $e$ is a value or $e\mid\mu\longrightarrow r$ and either:
  \begin{itemize}
  \item $r=\mathtt{pyerror}(\bullet)$, or
  \item $r=\mathtt{casterror}$, or
  \item $r=e'\mid\mu'$.
  \end{itemize}
\end{mainlemma}
\begin{sketch}
  Induction on $\emptyset{;}\Sigma \vdash e:S$. In the
  \textsc{-Dyn} cases, this lemma is trivial, since the catch-all
  evaluation rules that lead to $\texttt{pyerror}(\bullet)$ satisfy
  the lemma.
\end{sketch}

\begin{mainlemma}[Composition]\label{lem:apx:composition}
  If $\cc:\Gamma{;}S\Rightarrow \Gamma'{;}S'$ and
  $\Gamma{;}\emptyset\vdash e:S$, then $\Gamma'{;}\emptyset\vdash \cc[e]:S'$.
\end{mainlemma}
\begin{sketch}
  By induction on $\cc:\Gamma{;}S\Rightarrow \Gamma'{;}S'$.
\end{sketch}

\begin{techlemma}[Typed code blames $\bullet$]\label{lem:apx:blame}
  If $e\mid\mu\longrightarrow \mathtt{pyerror}(p)$ and $\emptyset;\Sigma\vdash e :S $ and $\Sigma\vdash\mu$, then $p=\bullet$.
\end{techlemma}
\begin{sketch}
  Straightforward induction on $e\mid\mu\longrightarrow
  \mathtt{pyerror}(p)$, noting that $\circ$-labeled code cannot take
  steps that lead to \texttt{pyerror}.
\end{sketch}
\begin{corollary}\label{lem:apx:iterblame}
  If $e\mid\mu\longrightarrow^{*} \mathtt{pyerror}(p)$ and $\emptyset;\Sigma\vdash e :S $ and $\Sigma\vdash\mu$, then $p=\bullet$.
\end{corollary}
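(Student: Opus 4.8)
The plan is to induct on the derivation of $e\mid\mu\longrightarrow^{*}\mathtt{pyerror}(p)$, discharging the single-step case with Technical Lemma~\ref{lem:apx:blame} and re-establishing the typing invariant across chained steps with Main Lemma~\ref{lem:apx:preservation}. Of the four rules that can conclude a multi-step reduction, \textsc{MRefl} and \textsc{MCastErr} are vacuous, since their results ($e\mid\mu$ and $\mathtt{casterror}$, respectively) do not unify with $\mathtt{pyerror}(p)$. This leaves \textsc{MPyErr} and \textsc{MChain}.

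For \textsc{MPyErr} we have a single step $e\mid\mu\longmapsto\mathtt{pyerror}(p)$ with $e$ well-typed and $\Sigma\vdash\mu$, so Technical Lemma~\ref{lem:apx:blame} directly yields $p=\bullet$. For \textsc{MChain} we have $e\mid\mu\longmapsto e'\mid\mu'$ followed by $e'\mid\mu'\longrightarrow^{*}\mathtt{pyerror}(p)$. Before the inductive hypothesis can be applied to the tail, $e'$ must be shown well-typed against a heap type matching $\mu'$; this is exactly the output of preservation. From $\emptyset;\Sigma\vdash e:S$, $\Sigma\vdash\mu$, and the step, Main Lemma~\ref{lem:apx:preservation} produces a $\Sigma'$ with $\emptyset;\Sigma'\vdash e':S$, $\Sigma'\vdash\mu'$, and $\Sigma\sqsubseteq\Sigma'$; the inductive hypothesis instantiated at $\Sigma'$ and $\mu'$ then gives $p=\bullet$.

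All of the substance resides in Technical Lemma~\ref{lem:apx:blame} and Main Lemma~\ref{lem:apx:preservation}; the corollary is a routine iteration, and I would present it as such (indeed it mirrors the structure of Corollary~\ref{lem:apx:iterpres}). The one point demanding attention is that the heap type is not fixed along the reduction but only grows ($\Sigma\sqsubseteq\Sigma'$), so the induction must carry the updated heap type forward at each chained step rather than reuse the original $\Sigma$ --- which is precisely why preservation, and not merely the initial typing hypothesis, is invoked in the \textsc{MChain} case. A secondary bookkeeping concern, invisible in the contextless Coq development but present in the context-based presentation here, is that the single-step blame and preservation lemmas must be read against the context-closed relation $\longmapsto$; this lifting is immediate because typing decomposes through evaluation contexts (whose holes never sit under additional binders), so a well-typed $E[e_0]$ supplies a well-typed redex $e_0$ to which the single-step results apply.
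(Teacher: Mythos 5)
Your proof takes essentially the same route as the paper, which likewise argues by induction on the derivation of $e\mid\mu\longrightarrow^{*}\mathtt{pyerror}(p)$, discharging the single-step case with Technical Lemma~\ref{lem:apx:blame} and (implicitly) using Main Lemma~\ref{lem:apx:preservation} to carry the typing invariant through the \textsc{MChain} case. Your extra remarks about threading the enlarged heap type $\Sigma'$ forward and about reading the single-step lemmas against the context-closed relation are correct elaborations of details the paper's one-line sketch leaves implicit.
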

\begin{sketch}
  Induction on $e\mid\mu\longrightarrow^{*} \mathtt{pyerror}(p)$, applying Technical Lemma \ref{lem:apx:blame} in the case for \textsc{EChain}.
\end{sketch}

\setcounter{theorem}{0}
\begin{theorem}[Open world soundness]
  If $\Gamma \vdash t \leadsto e : \tsym$, then for any context $\cc$
  such that $\cc:\flr{\Gamma}{;}\flr{\tsym}\Rightarrow \emptyset{;}S$,
  then $\emptyset;\emptyset\vdash \cc[e]:A$ and either:
  \begin{itemize}
  \item for some $v,\Sigma,\mu$, $\cc[e]\mid\emptyset\longrightarrow^{*} v\mid\mu$ and $\emptyset;\Sigma\vdash v:S$ and $\Sigma\vdash\mu$, or
  \item $\cc[e]\mid\emptyset\longrightarrow^{*}\mathtt{pyerror}(\bullet)$, or
  \item $\cc[e]\mid\emptyset\longrightarrow^{*}\mathtt{casterror}$, or
  \item for all $r$ such that $\cc[e]\mid\emptyset\longrightarrow^{*}r$, have $r = e'\mid\mu'$ and there exists $r'$ such that $e'\mid\mu'\longrightarrow r'$.
  \end{itemize}
\end{theorem}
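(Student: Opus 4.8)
The plan is to read the theorem as a type-soundness statement for \upython{} specialized to its four observable outcomes, and to assemble it from the already-established main lemmas rather than reprove anything from scratch. First I would dispatch the typing conjunct $\emptyset;\emptyset\vdash\cc[e]:S$. Applying Main Lemma~\ref{lem:apx:trans-type-preserve} (translation preserves typing) to the hypothesis $\Gamma\vdash t\leadsto e:A$ gives $\flr{\Gamma};\emptyset\vdash e:\flr{A}$; feeding this, together with the context typing $\cc:\flr{\Gamma};\flr{A}\Rightarrow\emptyset;S$, into Main Lemma~\ref{lem:apx:composition} (Composition) yields exactly $\emptyset;\emptyset\vdash\cc[e]:S$. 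I would also record that the initial heap is well typed, $\emptyset\vdash\emptyset$, since both clauses of $\Sigma\vdash\mu$ hold vacuously when $\Sigma$ and $\mu$ are empty.

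Next I would classify the behavior of $\cc[e]\mid\emptyset$ by a four-way case split aligned with the four disjuncts, keeping the typing alive along every reduction with Corollary~\ref{lem:apx:iterpres} (Iterated preservation). If a value configuration is reachable, $\cc[e]\mid\emptyset\longrightarrow^{*}v\mid\mu$, then iterated preservation supplies some $\Sigma$ with $\emptyset;\Sigma\vdash v:S$ and $\Sigma\vdash\mu$, giving the first disjunct. If $\mathtt{pyerror}(p)$ is reachable, Corollary~\ref{lem:apx:iterblame} forces $p=\bullet$ --- this is the heart of the theorem, the formal content of ``translated, $\circ$-labeled code is never the culprit'' --- which yields the second disjunct. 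If $\mathtt{casterror}$ is reachable, the third disjunct holds directly.

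In the remaining case, where no value, no $\mathtt{pyerror}$, and no $\mathtt{casterror}$ is reachable, I would establish the divergence disjunct. For any $r$ with $\cc[e]\mid\emptyset\longrightarrow^{*}r$, the absence of reachable errors forces $r$ to be a configuration $e'\mid\mu'$, and the absence of reachable values forces $e'$ to be a non-value. Corollary~\ref{lem:apx:iterpres} keeps $e'\mid\mu'$ well typed (at $S$, under some $\Sigma'\vdash\mu'$), so Main Lemma~\ref{lem:apx:progress} (Progress) applies and, since $e'$ is not a value, guarantees a further step $e'\mid\mu'\longrightarrow r'$, which is precisely what the fourth disjunct asserts. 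Instantiating $\cc=\chole$ then recovers ordinary type soundness as an immediate corollary.

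The main obstacle is not a computation but the bookkeeping of matching this case split to the unconventional four-outcome conclusion, and in particular confirming that the two genuinely ``forbidden'' behaviors are excluded: reaching $\mathtt{pyerror}(\circ)$ is ruled out only by the blame Corollary~\ref{lem:apx:iterblame}, and becoming stuck is ruled out only by the strong form of Progress (Main Lemma~\ref{lem:apx:progress}), which classifies every non-value step as $\mathtt{pyerror}(\bullet)$, $\mathtt{casterror}$, or a configuration. Consequently the real weight of the theorem has already been discharged in establishing Preservation, Progress, and the blame lemma; what remains here is essentially their orchestration, so I expect the proof itself to be short once those lemmas are in hand.
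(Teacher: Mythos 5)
Your proposal is correct and follows essentially the same route as the paper's proof: both establish $\emptyset;\emptyset\vdash\cc[e]:S$ via Main Lemma~\ref{lem:apx:trans-type-preserve} and Main Lemma~\ref{lem:apx:composition}, then dispatch the four outcomes using Corollary~\ref{lem:apx:iterpres}, Main Lemma~\ref{lem:apx:progress}, and Corollary~\ref{lem:apx:iterblame}. The only difference is cosmetic --- the paper organizes the case split around divergence versus termination while you organize it around the four disjuncts directly --- but the lemmas play identical roles.
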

\begin{proof}
  From Main Lemma \ref{lem:apx:trans-type-preserve},
  $\flr{\Gamma}{;}\emptyset\vdash e:\flr{\tsym}$. Then from Main
  Lemma \ref{lem:apx:composition}, $\emptyset{;}\emptyset\vdash
  \cc[e]:S$. Either $\cc[e]\mid\emptyset$ diverges or it does not. If
  it does, the theorem is satisfied immediately; otherwise, we have
  some $r$, $\cc[e]\mid\emptyset\longrightarrow^{*}r$, such that
  either $r\neq e'\mid\mu'$ or $\not\exists r',
  e'\mid\mu'\longrightarrow r'$.  In the latter case, by Main Lemma
  \ref{lem:apx:progress}, either $e'$ is a value or there exists some
  $r'$, $e'\mid\mu'\longrightarrow r'$. The latter is a contradiction,
  and so $e'$ is a value. By Corollary \ref{lem:apx:iterpres},
  $\emptyset;\Sigma'\vdash e:S$ and $\Sigma\vdash\mu'$ for some $\Sigma'$.

  If $r=\mathtt{casterror}$, the theorem is satisfied. If $r=\mathtt{pyerror}(p)$, by Corollary \ref{lem:apx:iterblame}, the theorem is satisfied.
\end{proof}

\begin{figure*}
\boxed{\textit{param-match}(v, \mu, C)}
\begin{mathpar}
  \inference{}{\textit{param-match}(\lambda \overline{x}. e, \mu, \mathsf{Any})} \and
  \inference{|\overline{x}| =n}{\textit{param-match}(\lambda \overline{x}. e, \mu, n)} \and
  \inference{\mu(a)=\classh{\overline{a}}{M}{v}}{\textit{param-match}(a, \mu, \mathsf{Any})}\and
  \inference{\mu(a)=\classh{\overline{a}}{M}{v} \\ \textit{param-match}(v, \mu, n+1)}{\textit{param-match}(a, \mu, n)}
\end{mathpar}
\boxed{\textit{getattr}(a,x,\mu)=v}
\[
\begin{array}{rcl}
  \textit{getattr}(a,x,\mu) & = & \left\{
    \begin{array}{ll}
      M(x) & \text{if}~\mu(a')=\objh{a'}{M}\\
      &\text{and}~x\in\textit{dom}(M)\\[.5ex]
      \textit{getattr}(a',x,\mu) & \text{if}~\mu(a)=\objh{a'}{M}\\
      &\text{and}~x\not\in\textit{dom}(M)\\[.5ex]
      M(x) & \text{if}~\mu(a')=\classh{\overline{a'}}{M}{v}\\
      &\text{and}~x\in\textit{dom}(M)\\[.5ex]
      \textit{getattr}(a'_k,x,\mu) & \text{if}~\mu(a)=\classh{a'_0,\ldots, a'_k, \ldots, a'_n}{M}{v}\\
      &\text{and}~x\not\in\textit{dom}(M)\\
      &\text{and}~\forall i, 0\le i<k,\textit{getattr}(a'_i,x,\mu)\neq v
    \end{array}\right.
\end{array}
\]
\begin{multicols}{2}

\boxed{\textit{hasattr}(a,x,\mu)}
\[
  \inference{\textit{getattr}(a,x,\mu)=v}{\textit{hasattr}(a,x,\mu)}
\]
\columnbreak

\boxed{\textit{hasattrs}(a,\delta,\mu)}
\[
  \inference{\forall x \in \delta.~\textit{getattr}(a,x,\mu)}{\textit{hasattrs}(a,\delta,\mu)}
\]
\end{multicols}

\boxed{\Sigma\sqsubseteq\Sigma}
\begin{mathpar}
\inference{\forall a\in\textit{dom}(\Sigma_1).~\Sigma_2(a) <: \Sigma_1(a)}{\Sigma_1\sqsubseteq\Sigma_2}
\end{mathpar}

  \boxed{\tsym\sim \tsym}
  \begin{mathpar}
    \dyn\sim \tsym \and
    \tsym \sim \dyn \and
    \texttt{int} \sim \texttt{int} \and
    \inference{\Delta_1\sim\Delta_2}
              {\objty{X}{q_1}{\Delta_1}\sim \objty{Y}{q_2}{\Delta_2}}\and
    \inference
        {\Delta_1\sim\Delta_3 \\ \Delta_2\sim\Delta_4 \\ 
          |\tsym_1| = |\tsym_3| \\ \overline{\tsym_1\sim \tsym_3}}
        {\classty{X}{q_1}{\Delta_1}{\Delta_2}{\overline{\tsym_1}}
          \sim \classty{Y}{q_2}{\Delta_3}{\Delta_4}{\overline{\tsym_2}}}\and
 \inference{|\tsym_1| = |\tsym_3| \\ \overline{\tsym_1\sim \tsym_3} \\ \tsym_2 \sim \tsym_4}{\overline{\tsym_1}\to \tsym_2\sim\overline{\tsym_3}\to \tsym_4}
  \end{mathpar}
  \caption{Other metafunctions}
  \label{fig:apx:more-metafunctions}
\end{figure*}





\end{document}